\newtheorem{theorem}{Theorem}
\newtheorem{corollary}{Corollary}
\newtheorem{lemma}{Lemma}
\newif\iffinal
\newif\ifwithextensions
  \definecolor{lightblue}{rgb}{.8,.95,1}
\newcommand{\sj}[1]{\todo{SJ: #1}}
\newcommand{\ms}[1]{\todo{MS: #1}}
\newcommand{\mv}[1]{\todo{MV: #1}}
\DeclareTextFontCommand{\emph}{\em}
\newcommand{\mailto}[1]{\href{mailto:#1}{\nolinkurl{#1}}}
\renewcommand{\vec}[1]{\mathbf{#1}}
\newcommand{\vc}{\vec{c}}
\newcommand{\vcp}{\vec{c\scriptstyle'}}
\newcommand{\vu}{\vec{u}}
\newcommand{\modelsg}[1]{\models_{#1}}
\newcommand{\mR}{\mathcal{R}}
\newcommand{\mT}{\mathcal{T}}
\newcommand{\mC}{\mathcal{C}}
\newcommand{\mF}{\mathcal{F}}
\newcommand{\mA}{\mathcal{A}}
\newcommand{\mP}{\mathcal{P}}
\newcommand{\mU}{\mathcal{U}}
\newcommand{\mE}{\mathcal{E}}
\newcommand{\init}{{\sf init}\xspace}
\definecolor{darkgreen}{rgb}{0,0.5,0}
\definecolor{darkblue}{rgb}{0,0,.5}
\definecolor{mygray}{gray}{.3}
\newcommand{\state}{q}
\newcommand{\cstate}{s}
\newcommand{\cstateset}{S}
\newcommand{\stateset}{\expandafter\MakeUppercase\expandafter{\state}}
\newcommand{\Stateset}{\expandafter\MakeUppercase\expandafter{\State}}
\newcommand{\trans}{\ensuremath{\delta}}
\newcommand{\Trans}{\ensuremath{\Delta}}
\renewcommand{\time}{m}
\newcommand{\card}[1]{\left| {#1} \right|}
\newcommand{\Nat}{\ensuremath{\mathbb{N}}}
\newcommand{\cupdot}{\mathbin{\dot{\cup}}}
\newcommand{\smartpar}[1]{\medskip \noindent {\bf #1}}
\declaretheorem[name=Corollary]{cor}
\newcommand{\cutoffsys}{\ensuremath{A {\parallel} B^c}\xspace}%{\ensuremath{(A,B)^{(1,c)}}\xspace}
\newcommand{\largesys}{\ensuremath{A {\parallel} B^n}\xspace}%{\ensuremath{(A,B)^{(1,n)}}\xspace}
\renewcommand{\todo}[1]{}
\newcommand{\gray}[1]{}
\newcommand{\gray}[1]{{\color{black!50} #1}}
\newcommand{\Xrightarrow}[2]{\xrightarrow{#1}{}\negthickspace^{#2}\ }
\newcommand{\transition}[3]{#1 \xrightarrow{#2} #3}
\newcommand{\transitionP}[4]{#1 \Xrightarrow{#2}{#3} #4}
\newcommand{\li}{\begin{itemize}}
\newcommand{\il}{\end{itemize}}
	 	\tikzstyle{proc} = [rectangle,draw=black,fill=green!20,thick,inner sep=10pt]
	  	\tikzstyle{state} = [circle,draw=black,thick,inner sep=3pt]
	  	\tikzstyle{noproc} = [circle]
	  	\tikzstyle{lbl} = [rectangle,node distance=2cm]
  		\tikzstyle{pre} = [ <-,shorten <=2pt,shorten >=2pt, >=stealth', semithick]
	  	\tikzstyle{post} = [ ->,shorten <=2pt,shorten >=2pt, >=stealth', semithick]
  	\tikzstyle{pstate} = [circle, fill=black,thick, inner sep=2pt, minimum size=2mm]
    \tikzstyle{hiddenstate} = [circle]
  	\tikzstyle{trans-notsched} = [ ->,shorten <=2pt,shorten >=2pt, >=stealth', semithick]
  	\tikzstyle{trans-sched} = [ ->,shorten <=2pt,shorten >=2pt, >=stealth', very thick] 
\newcommand{\ord}{\lesssim}
\newcommand{\ordPR}{\lesssim^{\scalebox{.5}{PR \;}}}
\newcommand{\ordPRC}{\lessapprox^{\scalebox{.5}{PR \;}}}
\algnewcommand{\LineComment}[1]{\State \(\triangleright\) #1}
\newcommand{\smartparagraph}[1]{\medskip \noindent \textbf{#1.}}
\newcommand{\nuparrow}{\hspace{0.1cm}\uparrow \hspace{-0.1cm}}
\newcommand{\guardset}{\mathcal{G}}
\newcommand{\abs}{\ensuremath{\mathsf{\alpha}}}
\newcolumntype{L}[1]{>{\raggedright\arraybackslash}p{#1}}
\newcolumntype{C}[1]{>{\centering\arraybackslash}p{#1}}
\newcommand{\cbasiss}{\begin{array}{l}
		CBasis  = \{ (\state_A,\vc) \in \cstateset \mid \exists (\state'_A,\vcp) \in R: \\
		\ \ \boldsymbol{[ \:} (\state_A,g,\state'_A) \in \delta_A \land (\state_A,\vc) \modelsg{\state_A} g 
		\land \\(\: (
		\vc=\vcp) \lor  ( \vcp(t) = 0 \land \vc=\vcp+\vu_t) \:) \boldsymbol{\:]}\\
		\ \ \lor \boldsymbol{[\:} (\state_i,\{q_t\},\state_j) \in \delta_B \land (\state_A,\vc) \modelsg{\state_i} g \land \state_A = \state'_A \\	
		\ \ \ \ \land \boldsymbol{(}\ ( \vc = \vcp + \vu_i - \vu_j ) \lor ( \vcp(t) = 0 \land \vcp(j) \geq 1 \land \\
		\ \ \ \ \ \ \ \ \vc = \vcp + \vu_i - \vu_j + \vu_t)
		 \lor ( f \land \vc = \vcp + \vu_i )  \\
		  \ \ \ \ \ \ \ \lor ( \: \vcp(t) = 0 \land \vcp(j)=0 \land \vc = \vc + \vu_i + \vu_t )\boldsymbol{\:\: )} \boldsymbol{\:]} \: \:\}.
\end{array}}
\begin{document}

	% Author macros::begin %%%%%%%%%%%%%%%%%%%%%%%%%%%%%%%%%%%%%%%%%%%%%%%%
	\title{Automatic Repair and Deadlock Detection\\ for Parameterized Systems}
	\author{
	\IEEEauthorblockN{Swen Jacobs}% \orcid{0000-0002-9051-4050}}
		\IEEEauthorblockA{
			\textit{CISPA, Saarbr\"ucken, Germany}
		}
			\and
	\IEEEauthorblockN{Mouhammad Sakr}% \orcid{0000-0002-5160-0327}}
	\IEEEauthorblockA{
		\textit{SnT, University of Luxembourg}
	}
	\and
	\IEEEauthorblockN{Marcus V{\"o}lp}% \orcid{0000-0002-8020-4446}}
	\IEEEauthorblockA{
		\textit{SnT, University of Luxembourg}}
	}
	%\titlerunning{Automatic Repair and Deadlock Detection for Parameterized Systems}

%	\author{
%	Swen Jacobs\inst{1} %\orcidID{0000-0002-9051-4050}
%	\and Mouhammad Sakr\inst{2}%\orcidID{0000-0002-5160-0327}
%	\and Marcus V{\"o}lp\inst{2}%\orcidID{0000-0002-8020-4446}
%	}
%	\authorrunning{S. Jacobs, M. Sakr, and M. V{\"o}lp}
	%
	
%	\institute{
%	CISPA Helmholtz Center for Information Security, Saarbr\"ucken, Germany
%	\and SnT, University of Luxembourg}

\maketitle              
 \begin{abstract} 
We present an algorithm for the repair of parameterized systems.
The repair problem is, for a given process implementation, to find a 
refinement such that a given safety property is satisfied by the resulting 
parameterized system, and deadlocks are avoided.
Our algorithm uses a parameterized model checker to determine the correctness of candidate solutions
and employs a constraint system to rule out candidates.
We apply this algorithm on systems that can be 
represented as well-structured transition systems (WSTS), including 
disjunctive systems, pairwise rendezvous systems, and broadcast protocols. 
Moreover, we show that parameterized deadlock detection can be 
decided in EXPTIME for disjunctive systems, 
and that deadlock detection is in general undecidable for broadcast protocols.
\end{abstract} 

\section{Introduction}
\label{sec:intro}

Concurrent systems are hard to get correct, and are therefore a promising 
application area for formal methods. For systems that are composed 
of an \emph{arbitrary} number of processes $n$, 
methods such as \emph{parameterized} model checking can provide 
correctness guarantees that hold regardless of $n$. 
While the parameterized model checking problem (PMCP) is undecidable even if 
we restrict systems to uniform finite-state processes~\cite{Suzuki88}, 
there exist several approaches that decide the problem for specific classes 
of
systems and properties~\cite{German92,EsparzaFM99,Emerson00,Emerso03,EmersonK03,Clarke04c,AJKR14,AminofKRSV14,AminofR16}. 
%Additionally, there are semi-decision 
%procedures that are successful in a number of interesting 
%cases~\cite{Kurshan95,Bouajjani00,Clarke08,KaiserKW10,AbdullaHH13}.

However, if parameterized model checking detects a fault in a given system, 
it does not tell us how to repair the latter such that it satisfies the 
specification. To repair the system, the user has to find out which 
behavior of the system causes the fault, and how it can be corrected. Both tasks 
may be nontrivial. 

For faults in the internal behavior of a process, the approach we propose is based on a similar idea as existing 
repair approaches~\cite{Jobstm05,attie2017}: we 
start with a \emph{non-deterministic} implementation, and restrict non-determinism to obtain a 
correct implementation. 
This non-determinism may have 
been added by a designer to ``propose'' possible 
repairs for a system that is known or suspected to be faulty.

  However, repairing a process internally will not be enough in the
  presence of concurrency. We need to go beyond existing repair
  approaches, and also repair the \emph{communication} between
  processes to ensure the large number of possible interactions
  between processes is correct as well. We do so by choosing the right
  options out of a set of possible interactions, combining the idea above with
  that of synchronization synthesis~\cite{bloem2014synthesis,mcclurg17}.

%
%In contrast to the repair of internal behavior, if repair is not successful with a given set of options for communication, it is possible to automatically add all possible without them. 
In addition to guaranteeing safety properties, 
we aim for an approach that avoids introducing \emph{deadlocks}, which is particularly important for a repair algorithm, since often the easiest way to ``repair'' a system is to let it run into a deadlock as quickly as possible.
%In contrast to repairs of internal behavior,
Unlike non-determinism for repairing internal behavior, we are even able to introduce non-determinism for repairing communication automatically.

Regardless of whether faults are fixed in the internal behavior or in the communication 
of processes, we aim for a parameterized correctness guarantee, i.e., the repaired implementation should be correct in 
a system with any number of processes. We show how to achieve this by integrating techniques from 
parameterized model checking into our repair approach.
%\vspace{-0.5em}

%The goal is to automatically detect faults 
%in any system based on these components, and automatically fix them by 
%restricting the given non-determinism, such that any system based on 
%the repaired components is correct.  

%
%Like in these approaches, a repair is a 
%restriction of this given non-determinism. However, in our setting we not 
%only aim to find the right restrictions on the internal behavior of 
%processes, but also on their communication. 

\smartpar{High-Level Parameterized Repair Algorithm.}
Figure~\ref{fig:basicIdea} sketches the basic idea of our parameterized repair algorithm.

\begin{figure}[h]
%\vspace{-1em}
\centering
\scalebox{0.8}{
\centering
\begin{tikzpicture}[node distance=1.5cm,>=stealth,auto]
  \tikzstyle{stmt}=[rectangle, thin, draw = black]
  \tikzstyle{solution}=[rectangle, thin, draw = green]
  \tikzstyle{case}=[diamond, thin, aspect=2, draw = black,inner ysep=-1pt]
  \tikzstyle{unreal}=[rectangle, thin, draw = red]
  \begin{scope}
\node (start) {$M$};
\node [stmt] (MC)  [below of=start,yshift=.7cm]  {Model Check  $M$}
edge [pre] (start);

\node [case] (iscorrect) [below of=MC] {is $M$ correct?}
edge [pre]   (MC);

\node[left of=iscorrect, xshift=-1cm] {$M$}
edge [pre] node[above = 1pt] {Yes}   (iscorrect);

\node [stmt] (extract) [below of=iscorrect] {Refine constraints}
edge [pre] node[left = 1pt] {No: error sequence $\mE$}  (iscorrect);

\node [case] (isSAT) [below of=extract] {is SAT?}
edge [pre] (extract);

\node[left of=isSAT, xshift=-1cm] {Unrealizable}
edge [pre] node[above = 1pt] {No}   (isSAT);

\node [stmt] (restrict) [right  of=isSAT, xshift=+2.2cm] {Restrict $M$ with $\delta'$}
edge [pre] node[above = 1pt] {Yes: $\delta'$}   (isSAT);
%    (restrict.east) edge [post, bend right = 30] node[left = 1pt] {$M'$}  (MC);

\node [case] (needDead) [above right of=restrict,yshift=.5cm,xshift=.2cm] {deadlock?}
edge [pre, bend left = 30] (restrict.east)
(needDead.west)edge [post] node[above = 1pt] {Yes}   (extract)
(needDead.north)edge [post, bend right = 30] node[above = 1pt,xshift=.3cm] {No, $M'$}   (MC);

\end{scope}

  \end{tikzpicture}
}
  \caption{Parameterized repair of concurrent systems} \label{fig:basicIdea}

  %\vspace{-1em}
  \end{figure}
\begin{comment}
 \begin{scope}

\node (start) {};
\node [stmt] (MC) [below of=start,yshift=.7cm] {Model Check  $M$}
edge [pre] (start);

\node [case] (iscorrect) [below of=MC] {is $M$ correct?}
edge [pre]   (MC);

\node[right of=iscorrect, xshift=1cm] {$M$}
edge [pre] node[above = 1pt] {Yes}   (iscorrect);

\node [stmt] (extract) [below of=iscorrect] {Refine constraints to prevent $\mE$}
edge [pre] node[right = 1pt] {No: error sequence $\mE$}  (iscorrect);

\node [case] (isSAT) [below of=extract] {is SAT?}
edge [pre] (extract);

\node[right of=isSAT, xshift=1cm] {Unrealizable}
edge [pre] node[above = 1pt] {No}   (isSAT);

\node [stmt] (restrict) [left  of=isSAT, xshift=-2.2cm] {Restrict $M$ with $\gamma$}
edge [pre] node[above = 1pt] {Yes: $\gamma$}   (isSAT)
(restrict.west) edge [post, bend left = 30] node[left = 1pt] {$M'$}  (MC);

\end{scope}
\end{comment}
%
\sj{remove figure if we need space}
The algorithm starts with a representation $M$ of the parameterized system, 
based on non-deterministic models of the components, and checks if error states are 
reachable for any size of $M$. If not, the components are already correct. Otherwise, the 
parameterized model checker returns an error sequence $\mE$, i.e., one 
or more concrete error paths. 
$\mE$ is then encoded into constraints that ensure that any component that satisfies them will avoid the error paths detected so far. A SAT solver is used to find out if any solution still exists, and if so we restrict $M$ to components that avoid previously found errors.
To guarantee that this restriction does not introduce deadlocks, the next step is a parameterized deadlock detection.
This provides similar information as the model checker, and is used to refine the 
constraints if deadlocks are reachable. Otherwise, $M'$ is sent to the parameterized model 
checker for the next iteration. 

\smartpar{Research Challenges.}
Parameterized model checking in general is known to be undecidable, 
but different decision procedures exist for certain classes of systems, such as guarded protocols with disjunctive
guards (or disjunctive systems)~\cite{Emerson00}, pairwise systems~\cite{German92} and broadcast
protocols~\cite{EsparzaFM99}.
%which however recently has been shown to be TOWER-hard~\cite{CzerwinskiLLLM21}.
However, these theoretical solutions are not uniform and do not provide practical algorithms that allow us to extract the information 
needed for our repair approach. 
Therefore, the following challenges need to be overcome to obtain an 
effective parameterized repair algorithm for a broad class of systems:
\begin{itemize}[topsep=0pt]
\item[\textbf{C1}] The parameterized model checking algorithm should be uniform, and needs to provide
information about error paths in the current candidate model that allow us to avoid such error paths in future repair candidates. 
\item[\textbf{C2}] We need an effective approach for 
parameterized deadlock detection, preferably supplying similar information as the model checker.
\item[\textbf{C3}] We need to identify an encoding of the discovered information into constraints such that the repair process is sufficiently flexible\footnote{For example, to allow the user to specify additional properties of the repair, such as keeping certain states reachable.}, and sufficiently efficient to handle examples of interesting size.
\end{itemize}
%\vspace{-.5em}

%Research challenges \textbf{C1} and \textbf{C2} will be adressed in Sect.~\ref{sec:MC-disj}, and \textbf{C3} will be adressed in Sect.~\ref{sec:repair-alg}, where we also present our parameterized repair algorithm.

\smartpar{Parameterized Repair: an Example.}
\mv{We show in the below example only a part of our approach, but I'm not sure whether we should say sth. like Let us illustrate how our approach deals with communication faults ... or whether its better to change the example to also include internal repair that then causes a communication error to be repaired as well.}
Consider a system with one scheduler (Fig.~\ref{fig:intro-scheduler}) and an arbitrary number of reader-writer processes 
(Fig.~\ref{fig:intro-reader-writer}), running concurrently and communicating via pairwise rendezvous, i.e., every send actions (e.g. $write!$) 
needs to synchronize with a receive action (e.g. $write?$) by another process. 
In this system,
multiple processes can be in the $writing$ state at the same time, which must be avoided if they use a shared resource. 
We want to repair the system by restricting communication of the scheduler.

According to the idea in Fig.~\ref{fig:basicIdea}, 
the parameterized model checker searches for reachable errors, 
and it may find that after two sequential $write!$ transitions by different reader-writer processes, 
they both occupy the $writing$ state at the same time. 
This information is then encoded into constraints on 
the behavior of processes, which restrict non-determinism and communication and make the given error path impossible.
However, in our example all errors could be avoided by simply removing all outgoing transitions of state $q_{A,0}$ of the scheduler.
To avoid such repairs, our algorithm uses \emph{initial constraints}~(see section \ref{sec:repair-alg}) that enforce totality on the transition relation.
Another undesirable solution would be the scheduler shown in Fig.~\ref{fig:intro-pw-correct-scheduler}, because the resulting system will deadlock immediately. This is avoided by checking reachability of deadlocks on candidate repairs.
We get a solution that is safe and deadlock-free if we take Fig.~\ref{fig:intro-pw-correct-scheduler} and flip all transitions. 
%Our solution further supports means to constrain the removal of non-determinism 
%(e.g., by specifying mandatory-to-include states) to prevent repairs from removing 
%fault models and to ensure desired states remain reachable.\sj{not sure what the ``fault models'' mean, I would shorten this to ``...non-determinism, e.g., to ensure  %that desired states remain reachable.''}
%	For more details, see App.~\ref{sec:PW-RW}.

\begin{figure}[h]
%\vspace{-1.2em}
	\fboxrule=0pt
	\fbox{
\begin{minipage}[b]{0.3\linewidth}
	\centering
	\begin{tikzpicture}[node distance=3cm,>=stealth,auto]
	\tikzstyle{state}=[circle,thick,draw=black,minimum size=8mm]
	\begin{scope}    
	\node [state] (qA0) {$q_{A,0}$}
	edge [pre] (0,1);
	\node [state] (qA1) [below of = qA0] {$q_{A,1}$}
	edge [post, bend left] node[left=0.1,rotate=90,xshift=0.2cm]{\tiny{$read?$}} (qA0)
	(qA1.155) edge [post, bend left] node[left=0.1cm,rotate=90,xshift=0.2cm]{\tiny{$done_r?$}} (qA0.215)
	(qA1.185) edge [post, bend left=40] node[left=0.1cm,rotate=90,xshift=0.2cm]{\tiny{$write?$}} (qA0.180)
	(qA1.215) edge [post, bend left=60] node[left=0.1cm,rotate=90,xshift=0.2cm]{\tiny{$done_w?$}} (qA0.145)
	(qA1.95) edge [pre, bend right=20] node[right=0.1cm,rotate=270,xshift=-0.3cm]{\tiny{$read?$}} (qA0)
	(qA1.60) edge [pre, bend right=20] node[right=0.1cm,rotate=270,xshift=-0.3cm]{\tiny{$done_r?$}} (qA0.315)
	(qA1.25) edge [pre, bend right=25] node[right=0.1cm,rotate=270,xshift=-0.3cm]{\tiny{$write?$}} (qA0.350)
	(qA1.355) edge [pre, bend right=40] node[right=0.1cm,rotate=270,xshift=-0.3cm]{\tiny{$done_w?$}} (qA0.15);  
	%\node (e1-lbl1) [above left of = qA1] {\tiny{$read?$}};
	%\node (e1-lbl2) [below of = e1-lbl1,node distance=.25cm] {\tiny{$done_r?$}};
	%\node (e1-lbl3) [below of = e1-lbl2,node distance=.25cm] {\tiny{$write?$}};
	%\node (e1-lbl4) [below of = e1-lbl3,node distance=.25cm] {\tiny{$done_w?$}};
	%\node (e2-lbl1) [above right of = qA1] {\tiny{$read?$}};
	%\node (e2-lbl2) [below of = e2-lbl1,node distance=.25cm] {\tiny{$done_r?$}};
	%\node (e2-lbl3) [below of = e2-lbl2,node distance=.25cm] {\tiny{$write?$}};
	%\node (e2-lbl4) [below of = e2-lbl3,node distance=.25cm] {\tiny{$done_w?$}};			
	%\tiny{$read?//done-r?write?//done-w?$}
	\end{scope}
	\end{tikzpicture}
	\caption{Scheduler} \label{fig:intro-scheduler}
\end{minipage}
}
\fbox{%
\begin{minipage}[b]{0.27\linewidth}
	\centering
	
		\begin{tikzpicture}[node distance=1.5cm,>=stealth,auto]
	\tikzstyle{state}=[circle,thick,draw=black,minimum size=8mm]
	\begin{scope}    
	\node [state] (qA0) {$q_{0}$}
	edge [pre] (-0.9,0)
	 %edge [loop right] node[above = 0.5pt] {\tiny{$\tau$}} (qA0)
	 ;
	\node [state] (qA1) [below of = qA0] {$q_{1}$}
	edge [pre, bend left] node[left = 1pt] {\tiny{$write!$}}  (qA0)
	edge [post, bend right] node[right = 1pt] {\tiny{$done_w!$}}  (qA0); % 
	\node [state] (qA2) [above of = qA0] {$q_{2}$}
	edge [post, bend left] node[right = 1pt] {\tiny{$done_r!$}}  (qA0)
	edge [pre, bend right] node[left = 1pt] {\tiny{$read!$}}  (qA0); %   
	\node (q2-lbl) [above of = qA2,node distance=.6cm] {$\{reading\}$};
	\node (q1-lbl) [below of = qA1,node distance=.65cm] {$\{writing\}$};
	\end{scope}
	\end{tikzpicture}
	\caption{Reader-Writer} \label{fig:intro-reader-writer}
\end{minipage}
}
	\fbox{
	\begin{minipage}[b]{0.27\linewidth}
		\centering
		\begin{tikzpicture}[node distance=2.6cm,>=stealth,auto]
			\tikzstyle{state}=[circle,thick,draw=black,minimum size=8mm]
			\begin{scope}    
				\node [state] (qA0) {$q_{A,0}$}
				edge [pre] (0,1);
				\node [state] (qA1) [below of = qA0] {$q_{A,1}$}
				edge [pre, bend left] node[left=0.1cm,rotate=90,xshift=0.2cm] {\tiny{$done_w?$}}  (qA0)
				(qA1.185) edge [pre, bend left] node[left=0.1cm,rotate=90,xshift=0.2cm]{\tiny{$done_r?$}} (qA0.205)			
				(qA1.95) edge [post, bend right=20] node[right=0.1cm,rotate=270,xshift=-0.3cm]{\tiny{$read?$}} (qA0)
				(qA1.40) edge [post, bend right=20] node[right=0.1cm,rotate=270,xshift=-0.3cm]{\tiny{$write?$}} (qA0.325);
			\end{scope}
		\end{tikzpicture}
		\caption{deadlocked Scheduler} \label{fig:intro-pw-correct-scheduler}
	\end{minipage}
}
%\vspace{-2.5em}
\end{figure}

%\vspace{-.5em}

\smartpar{Contributions.}
Our main contribution is a counterexample-guided parameterized repair approach, 
based on model checking of well-structured transition systems 
(WSTS)~\cite{abdulla1996general,finkel2001well}. 
We investigate which information a parameterized model checker needs 
to provide to guide the search for candidate repairs, and how this 
information can be encoded into propositional constraints.
Our repair algorithm supports internal repairs and repairs of the communication behavior, while systematically avoiding deadlocks in many classes of systems, 
including disjunctive systems, pairwise systems and broadcast
protocols.

Since existing model checking algorithms for WSTS do not support deadlock detection, our approach 
has a subprocedure for this problem, which relies on \emph{new theoretical results}:
(i) for disjunctive 
systems, we provide a novel deadlock detection algorithm, based on an abstract transition system, 
 that improves on the complexity of the 
best known solution; (ii) for broadcast protocols we prove that deadlock detection is in general 
undecidable, so approximate methods have to be used. 
%We discuss how constraining repairs widens the applicability of our approach while guaranteeing additional properties, 
%e.g., that desirable states remain reachable. 
We also discuss approximate methods to detect deadlocks in pairwise systems, 
which can be used as an alternative to the existing approach that has a prohibitive complexity.

Finally, we evaluate an implementation of our algorithm on 
benchmarks
from different application domains, including a distributed lock service and 
a robot-flocking protocol. 
\begin{comment}
This paper is organized as follows. In Sect.~\ref{sec:prelim} we 
present our basic system model for disjunctive systems. 
In Sect.~\ref{sec:problem} we define the parameterized repair problem, propose a solution, and 
highlight three research challenges. 
In Sect.~\ref{sec:MC-disj} we develop algorithms for parameterized model checking 
and deadlock detection that provide information on error paths for our repair 
approach. In Sect.~\ref{sec:repair-alg} we introduce 
our parameterized repair algorithm, and 
in Sect.~\ref{sec:extensions} we extend it to more general properties and systems.
We provide our experimental evaluation in Sect.~\ref{sec:eval}, followed by a 
discussion of related work in Sect.~\ref{sec:related}, and our conclusion and discussion of 
future work in Sect.~\ref{sec:conclusion}.
%\vspace{-1em}
\end{comment}
\section{System Model}
\label{sec:prelim}
\label{sec:model}

For simplicity, we first restrict our attention to disjunctive systems, other systems will be considered in Sect.~\ref{sec:beyond-disj-short}. In the following, let $\stateset$ be a finite set of
states.

\smartpar{Processes.} A \emph{process template} 
is a transition system
$U=(\stateset_U, \init_U, \guardset_U, \trans_U)$, 
where
%\footnote{In contrast to Au{\ss}erlechner et al.~\cite{AJK16}, for
	%simplicity we only consider closed process templates. However, our results
	%extend to open process templates in the same way as explained there.} 
	%\sj{check if remark in footnote is true. if unsure, remove}
 $\stateset_U \subseteq \stateset$ is a finite set of states including the
	initial state $\init_U$,
$\guardset_U \subseteq \mP(\stateset) $ is a set of guards, and
 $\trans_U: \stateset_U \times \guardset_U \times \stateset_U$ is a
	guarded
	transition relation.
%\end{itemize}
%A process template is \emph{closed} if $\inputs = \emptyset$, and otherwise
%\emph{open}.

We denote by $t_U$ a transition of $U$, i.e., $t_U \in \delta_U$, and by
$\delta_U(\state_U)$ the set of all outgoing transitions of $\state_U \in
\stateset_U$. 
We assume that $\delta_U$ is \emph{total}, i.e., for every $\state_U \in
\stateset_U$, $\delta_U(\state_U)\neq \emptyset$.
Define the \emph{size} of $U$ as $\card{U} = \card{\stateset_U}$. An instance of
template $U$ will be called a \emph{$U$-process}.

\smartpar{Disjunctive Systems.}
Fix process templates $A$ and $B$ with
$\stateset = \stateset_A \cupdot
\stateset_B$, 
and let $\guardset = \guardset_A \cup \guardset_B$ and $\trans = \trans_A \cup \trans_B$.
We consider systems $\largesys$, 
consisting 
of one $A$-process and $n$ 
$B$-processes
in an interleaving parallel composition.\footnote{The form $\largesys$ is only assumed for simplicity of presentation. Our results extend to systems with an arbitrary number of process templates.}

%Different $B$-processes are distinguished by subscript, i.e., for $i \in
%[1..n]$, $B_i$ is the $i$th instance of $B$. A state of the $A$-process is denoted by $\state_A$, and 
%we assume $\stateset_B = \{\state_{0}, \ldots, \state_{|B|-1}\}$. 
%We denote the set $\{A, B_1,\ldots,B_n\}$ as $\mP$, and write $p$ for a process
%in $\mP$. 

%Given a global state
%$s=(\state_A,\state_{B_1},\ldots,\state_{B_n})$, we say that a local
%transition $(\state_p,g,\state_p') \in \trans_U$ of $p$ is \emph{enabled in
%$s$}
%if $\exists p' \in
%\mP \setminus \{p\}:\ \ \state_{p'} \in g$. We write $(s,p) \models g$.

%A process is \emph{enabled} in $s$ if at least one of its transitions is
%enabled
%in $s$, otherwise it is \emph{disabled}.

The systems we consider are called ``disjunctive'' since guards are interpreted disjunctively, i.e., a transition 
with a guard $g$ is enabled if there \emph{exists} another process that is currently in one of the states in $g$.
Figures~\ref{fig:Writer} and \ref{fig:Reader} 
\begin{wrapfigure}[11]{r}{0.36\linewidth}
	%\vspace{-.5em}
	\scalebox{.9}{
		%\begin{figure}[h]
\fboxrule=0pt
%\fboxsep=0pt
%\noindent

\fbox{%
\begin{minipage}[b]{0.35\linewidth}
%\centering

  \begin{tikzpicture}[node distance=1.4cm,>=stealth,auto]
  \tikzstyle{state}=[circle,thick,draw=black,minimum size=8mm]

  \begin{scope}
    \node [state] (w) {$\mathbf{w}$}
     edge [loop above] node[above = 0.5pt] {\tiny{$\{\mathbf{r}\}$}} (r);   
    \node [state] (nw) [below of = w] {$\mathbf{nw}$}
     edge [pre] (-0.8,-1.4)
     edge [post, bend left]  (w)
     edge [pre, bend right] (w); %
   
  \end{scope}
  \end{tikzpicture}
  \caption{Writer} \label{fig:Writer}
  \end{minipage}
}
\fbox{%
\begin{minipage}[b]{0.4\linewidth}
%\centering
\begin{tikzpicture}[node distance=1.4cm,>=stealth,auto]
  \tikzstyle{state}=[circle,thick,draw=black,minimum size=8mm]
  \begin{scope}    
    \node [state] (r) {$\mathbf{r}$}
     edge [loop above] node[above = 0.5pt] {\tiny{$\{\mathbf{nw}$\}}} (r);   
    \node [state] (nr) [below of = r] {$\mathbf{nr}$}
    edge [pre] (-0.8,-1.4)
     edge [post, bend left] node[left = 0.5pt] {\tiny{$\{\mathbf{nw}\}$}}  (r)
     edge [pre, bend right] (r); %   
 
  \end{scope}
  \end{tikzpicture}
  \caption{Reader} \label{fig:Reader}
\end{minipage}
}
%\end{figure}
	}
	\label{fig:disj-reader-writer}
	%\vspace{-28pt}
\end{wrapfigure} 
give examples 
of process
templates. An example disjunctive
 system is $\largesys$, where $A$ is the
writer and $B$ the reader, and the guards 
determine which transition can be taken by a process, depending on its own 
state and the state of other processes in the system. 
Transitions with the trivial guard $g=Q$ are displayed without a guard. 
We formalize the semantics of disjunctive systems in the following.
%\vspace{-1em}

\smartpar{Counter System.}
A \emph{configuration} of a system $\largesys$ is a tuple $(\state_A,\vc)$,
where $\state_A \in \stateset_A$, and $\vc: \stateset_B \rightarrow \Nat_0$.
We identify $\vc$ with the vector
$(\vc(\state_{0}),\ldots,\vc(\state_{|B|-1})) \in \Nat_0^{|B|}$, and also use $\vc(i)$ to refer to $\vc(\state_{i})$.
Intuitively, $\vc(i)$ indicates how many processes are in state
$\state_{i}$.
We denote by $\vec{u}_i$ the unit vector with $\vec{u}_i(i) =
1$ and $\vec{u}_i(j) = 0$ for $j \neq i$.

Given a configuration $\cstate = (\state_A,\vc)$, we say that the guard $g$ of a local transition 
$(\state_U,g, \state'_U) \in \trans_U$ is \emph{satisfied in $\cstate$}, denoted $\cstate \modelsg{\state_U} g$, 
if one of the following conditions holds:
\begin{enumerate}[topsep=0pt,label=(\alph*)]
\item $\state_{U} = \state_{A}$, and $\exists \state_i \in \stateset_B$ with $\state_i \in g$ and $\vc(i) \geq 
1$\\
 ($A$ takes the transition, a $B$-process is in $g$)
\item $\state_{U} \neq \state_{A}$, $\vc(\state_{U}) \geq 1$, and $\state_{A} \in g$ \\
($B$-process takes the transition, $A$ is in $g$)
\item $\state_{U} \neq \state_{A}$, $\vc(\state_{U}) \geq 1$, and $\exists \state_i \in \stateset_B$ with $\state_i \in g$, 
$\state_i \neq \state_{U}$ and $\vc(i) \geq 1$\\
($B$-process takes the transition, another $B$-process is in different state in $g$)
\item $\state_{U} \neq \state_{A}$, $\state_{U} \in g$, and $\vc(\state_{U}) \geq 2$\\
($B$-process takes the transition, another $B$-process is in same state in $g$)
\end{enumerate}
We say that the local transition $(\state_U,g, \state'_U)$ is \emph{enabled} in $\cstate$. 
%A process is \emph{enabled} in $\cstate$ if at least one of its transitions is
%enabled in $\cstate$, otherwise it is \emph{disabled}.
%
%Informally, the guard $g$ of a local transition $(\state_U,g, \state'_U)$ is satisfied if and only if there is a process in $\state_{U}$ and there is another process (i.e., not the one taking the transition) in one of the local states of the guard $g$.

Then the \emph{configuration space} of all systems $\largesys$, for fixed $A,B$
but arbitrary $n \in \Nat$, is the transition system
$M=(\cstateset,\cstateset_0,\Trans)$ where:
\begin{itemize}[noitemsep,topsep=1pt]
	\item $\cstateset \subseteq \stateset_A \times \Nat_0^{|B|}$ is the set of
	states,
	\item $\cstateset_0=\{(init_A,\vc) \mid \vc(\state) = 0 \text{ if } 
	\state
	\neq init_B ) \}$ is the set of
	initial states, 
	\item $\Trans$ is the set of transitions $((\state_A,\vc), (\state'_A,\vc'))$
	s.t. one of the following holds:
	\begin{enumerate}
		\item $\vc=\vc' \land \exists (\state_A,g,\state'_A) \in \delta_A: (\state_A,\vc) \modelsg{\state_A}
		g$ (transition of $A$)
		\item $\state_A = \state_A' \land \exists (\state_{i},g,\state_{j}) \in \delta_B: \vc(i)
		\geq 1 \land \vc' = \vc - \vec{u}_i +   \vec{u}_j \land (\state_A,\vc)	\modelsg{\state_i}
		g$\\ (transition of a $B$-process) 
	\end{enumerate}
\end{itemize}
We will also call $M$ the \emph{counter system} (of $A$ and $B$), and will call configurations \emph{states} of $M$, or \emph{global states}. 

Let $\cstate,\cstate' \in \cstateset$ be states of $M$, and $U \in \{A, B\} $. 
For a transition $(\cstate,\cstate') \in \Trans$ we also write $\transition{\cstate}{}{\cstate'}$.
If the transition is based on the local transition
$t_U =(\state_U,g,\state'_U) \in \delta_U$, we also write
$\transition{\cstate}{t_U}{\cstate'}$ or $\transition{\cstate}{g}{\cstate'}$. 
Let $\Trans^{local}(\cstate)=\{t_U \mid \transition{\cstate}{t_U}{\cstate'}\}$, 
i.e., the set of all enabled outgoing local transitions from $\cstate$,
and let $\Trans(\cstate,t_U)=\cstate'$ if $\transition{\cstate}{t_U}{\cstate'}$.
From now on we assume wlog. that each guard $g \in \guardset$ is a 
singleton.\footnote{This is not a restriction as any local transition $(\state_U,g,\state'_U)$ with
a guard $g \in \guardset$ and $|g| > 1$ can be split into $|g|$ transitions
$(\state_U,g_1,\state'_U),\ldots,(\state_U,g_{|g|},\state'_U)$ where for all $i
\leq |g|:$ $g_i \in g$ is a singleton guard.}

\smartpar{Runs.} 
A \emph{path} of a counter system is a (finite or infinite) sequence of states
$x=\cstate_1, \cstate_2,\ldots$ such that 
$\transition{\cstate_{\time}}{}{\cstate_{\time+1}}$
for all $m \in \Nat$ with $\time < |x|$ if the path is finite. % of some process $p_\time$.
%We say that process $p_\time$ \emph{moves} at \emph{moment} $\time$. 
A \emph{maximal path} is a path that
cannot be extended, and a \emph{run} is a maximal path starting in an initial state.
%Runs are either infinite, or they end in a state where no transition is enabled. 
We say that a run is \emph{deadlocked} if it is finite.
Note that every run $\cstate_1, \cstate_2,\ldots$  of the
counter system corresponds to a run of a fixed system $\largesys$, i.e., the number of processes does not change 
during a run.
Given a set of error states $E \subseteq \cstateset$, an \emph{error path} is a finite path that starts in an initial 
state and ends in $E$.

\smartpar{The Parameterized Repair Problem.}
	Let $M=(\cstateset,\cstateset_0,\Trans)$ be the counter system for process templates 
	$A=(\stateset_A, \init_A, \guardset_A, 	\trans_A)$, $B=(\stateset_B, \init_B, \guardset_B, \trans_B)$, and $ERR \subseteq \stateset_A \times 
\Nat_0^{|B|}$ 
a set of error states. The \emph{parameterized
		repair problem} is
	to decide if there exist process templates $A'=(\stateset_A, \init_A, \guardset_A, \trans'_A)$
	with
	$\trans'_A \subseteq \trans_A$ and $B'=(\stateset_B, \init_B, \guardset_B,
	\trans'_B)$ with $\trans'_B \subseteq
	\trans_B$ such that the counter system $M'$ for $A'$ and $B'$ does not reach
	any state in $ERR$.
		
	If they exist, we call $\trans' = \trans'_A \cup \trans'_B$ a \emph{repair} for $A$ and $B$.
	We call $M'$ the \emph{restriction} of $M$ to $\trans'$, also denoted $Restrict(M,\trans')$.
	\todo{We can also figure the complexity of the problem as number of possible repairs is known and 
complexity (2-NEXPTIME) as model checker is at most double exponential}

Note that by our assumption that the local transition relations are total, a 
trivial repair that disables all transitions from some state is not allowed.
\todo{explain also that it is fine if for every non-deterministic choice exactly one option is chosen, since we assume that they are only given as alternatives?}

\section{Parameterized Model Checking of Disjunctive Systems}\label{sec:MC-disj}

In this section, we address research challenges \textbf{C1} and \textbf{C2}:
after establishing that counter systems can be framed as well-structured transition systems (WSTS) (Sect.~\ref{sec:disj-wsts}),
we introduce a parameterized model checking algorithm for 
disjunctive systems that suits our needs (Sect.~\ref{sec:MCalgo}), and finally show how the 
algorithm can be modified to also check for the reachability of 
deadlocked states (Sect.~\ref{sec:deadlock}).
Full proofs for the lemmas in this section can be found in Appendix~\ref{app:proofs}. %the extended version~\cite{JSV21}. 

\subsection{Counter Systems as WSTS}\label{sec:disj-wsts}

%An \emph{infinite-state transition system} is a tuple
%$S^{\infty}=(S,S_0,\Trans_s)$ where $S$ is an infinite set of states, $S_0
%\subseteq S$ is a set of initial states, $\Trans_s \subseteq S \times S$ is a
%transition relation. 

\smartpar{Well-quasi-order.}
Given a set of states $\cstateset$, a binary relation $\preceq \: \subseteq \cstateset \times \cstateset$ is a \emph{well-quasi-order}
(wqo) if $\preceq$
is reflexive, transitive, and if any infinite sequence $\cstate_0,\cstate_1,\ldots
\in \cstateset^{\omega}$ contains a pair $\cstate_i \preceq \cstate_j$ with $i < j$. A subset $R \subseteq \cstateset$ is an \emph{antichain} if any two distinct elements of $R$ are incomparable wrt. $\preceq$.
Therefore, $\preceq$  is a wqo on $\cstateset$ if and only if it is well-founded and has no infinite antichains.
%\sj{move definition of $\lessapprox$ here as an example, and give example for antichain?}

\smartpar{Upward-closed Sets.}
Let $\preceq$ be a wqo on $\cstateset$. The \emph{upward closure} of a set $R \subseteq \cstateset$,
denoted ${\uparrow}R$, is the set $\{\cstate \in \cstateset \mid \exists \cstate' \in R: \cstate' \preceq
\cstate\}$.
We say that $R$ is \emph{upward-closed} if ${\uparrow}R = R$.
If $R$ is upward-closed, then we call $B
\subseteq S$ a \emph{basis} of $R$ if ${\uparrow}B = R$. 
If $\preceq$ is also antisymmetric, then any basis of $R$ has a unique subset of minimal elements.
We call this set the \emph{minimal basis} of $R$, denoted $minBasis(R)$.
%Similarly let $basis(S')$ be a finite basis of $S'$ that is not necessarily
%minimal.
%i.e., a basis consisting only of elements $s \in S'$ such that $\nexists
%s' \in S': s' \preceq s$. 
%\sj{add examples? (from the algorithm example?)}

\smartpar{Compatibility.}
Given a counter system $M=(\cstateset,\cstateset_0,\Trans)$, we say that a wqo $\preceq \; \subseteq \cstateset \times \cstateset$ is \emph{compatible} with $\Trans$ if the following
holds:
$ \forall \cstate,\cstate',r \in S: \text{ if } \transition{\cstate}{}{\cstate'}
\text { and } \cstate \preceq r \text{ then }
\exists r' \text{ with } \cstate' \preceq r'
\text{ and } \transition{r}{}{^*r'}$.
We say $\preceq$ is \emph{strongly compatible} with $\Trans$ if the above holds with
$\transition{r}{}{r'}$ instead of $\transition{r}{}{^*r'}$.

\smartpar{WSTS \cite{abdulla1996general}.}
We say that $(M, \preceq)$ with $M=(\cstateset,\cstateset_0,\Trans)$ is a \emph{well-structured transition system} if $\preceq$ is a wqo on $\cstateset$ that is compatible with $\Trans$.

%\todo{note that this follows from existing results?}
\begin{restatable}{lemma}{lemmaOne}
	\label{lem:disjWSTS}
	Let $M=(\cstateset,\cstateset_0,\Trans)$ be a counter system for process templates $A,B$, and let
	$\lessapprox \; \subseteq \cstateset
	\times \cstateset$ be the binary relation defined by:
	$$(\state_{A}, \vc) \lessapprox (\state'_{A}, \vec{d})\ \Leftrightarrow\ \left( \state_{A} = \state'_{A} \land \vc
	\ord \vec{d}  \right), $$
	where $\ord$ is the component-wise ordering of vectors.
	Then $(M,\lessapprox)$ is a WSTS.
\end{restatable}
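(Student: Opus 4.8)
The plan is to verify the two defining properties of a WSTS for the pair $(M, \lessapprox)$: first that $\lessapprox$ is a well-quasi-order on $\cstateset$, and second that it is compatible (in fact strongly compatible) with $\Trans$.

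For the wqo part, I would argue as follows. Reflexivity and transitivity of $\lessapprox$ follow immediately from reflexivity and transitivity of equality on $\stateset_A$ and of the component-wise order $\ord$ on $\Nat_0^{|B|}$. For the well-quasi-order condition, take any infinite sequence $(\state_A^{(0)}, \vc^{(0)}), (\state_A^{(1)}, \vc^{(1)}), \ldots$ Since $\stateset_A$ is finite, by the pigeonhole principle some state $\state_A \in \stateset_A$ occurs infinitely often; restrict to that infinite subsequence. On the vector components we now have an infinite sequence in $\Nat_0^{|B|}$, and by Dickson's lemma (the fact that $(\Nat_0^k, \ord)$ is a wqo) there exist indices $i < j$ in this subsequence with $\vc^{(i)} \ord \vc^{(j)}$. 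Since both have first component $\state_A$, we get $(\state_A^{(i)}, \vc^{(i)}) \lessapprox (\state_A^{(j)}, \vc^{(j)})$, and these indices are also increasing in the original sequence. Hence $\lessapprox$ is a wqo.

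For compatibility, let $\cstate = (\state_A, \vc) \lessapprox r = (\state_A, \vec{d})$ (so the $A$-components agree and $\vc \ord \vec{d}$) and suppose $\transition{\cstate}{}{\cstate'}$ via some local transition $t_U$. I would case-split on whether $t_U$ is an $A$-transition or a $B$-transition, and in the $B$-case further on which of the guard-satisfaction clauses (a)--(d) applies at $\cstate$. The key observations are: (i) any guard satisfied at $\cstate$ is still satisfied at $r$, because each clause only requires certain coordinates of $\vc$ to be at least $1$ (or at least $2$), or requires $\state_A \in g$, and all these conditions are monotone under $\vc \ord \vec{d}$; and (ii) firing the same local transition from $r$ produces $r' = (\state'_A, \vec{d})$ in the $A$-case, or $r' = (\state_A', \vec{d} - \vu_i + \vu_j)$ in the $B$-case — note $\vec{d}(i) \geq \vc(i) \geq 1$, so the transition is applicable — and in each case $\cstate' \lessapprox r'$ because subtracting and adding the same unit vectors preserves the component-wise order. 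So $\lessapprox$ is in fact strongly compatible with $\Trans$, which implies compatibility, and therefore $(M,\lessapprox)$ is a WSTS.

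I do not expect a genuine obstacle here; the statement is essentially the standard observation that counter systems over $\Nat_0^k$ with monotone guards are WSTS. The only point requiring a little care is the $B$-transition case where the guard clause is (d), i.e. a $B$-process moves because \emph{another} $B$-process is in the \emph{same} state $\state_U \in g$, which needs $\vc(\state_U) \geq 2$; one must check this survives in $r$ (it does, since $\vec{d}(\state_U) \geq \vc(\state_U) \geq 2$) and that after firing, the source state still has a nonnegative count — again fine. Bundling the clause-by-clause monotonicity check cleanly, and invoking Dickson's lemma for the wqo part rather than reproving it, keeps the argument short.
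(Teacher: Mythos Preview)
Your proposal is correct and follows essentially the same approach as the paper: establish that $\lessapprox$ is a wqo (the paper simply invokes that $\ord$ is a wqo, while you spell out the pigeonhole-plus-Dickson argument), and then show strong compatibility by observing that enabledness of any local transition is monotone in $\vc$ and that firing the same local transition from $r$ yields an $r'$ with $\cstate' \lessapprox r'$, via the case split on $A$- versus $B$-transitions. Your treatment is more explicit about the guard clauses (a)--(d), but the structure and key observations coincide with the paper's proof.
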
%\todo{moved proof to appendix}

\smartpar{Predecessor, Effective $pred$-basis~\cite{finkel2001well}.}
Let $M=(\cstateset,\cstateset_0,\Trans)$ be a counter system and let
$R \subseteq \cstateset$. Then the
set of
\emph{immediate predecessors} of $R$ is
$$pred(R) = \{\cstate \in \cstateset \mid \exists r \in R: \transition{\cstate}{}{r} \}.$$
\noindent
A WSTS  $(M,\lessapprox)$ \emph{has effective $pred$-basis} if there
exists an algorithm that takes as input any finite set $R \subseteq \cstateset$ and
returns a
finite basis of ${\uparrow}pred({\uparrow}R)$.
Note that, since $\lessapprox$ is strongly compatible with $\Trans$, if a set $R \subseteq \cstateset$ is upward-closed with respect to $\lessapprox$ then $pred(R)$ is also upward-closed. A formal proof is given in Appendix~\ref{app:proofs}.%\footnote{For a formal proof, check the extended version~\cite{JSV21}.}

For backward reachability analysis, we want to compute $pred^*(R)$ as the limit of the sequence $R_0 \subseteq R_1 
\subseteq \dots$ where $R_0 = R$ and $R_{i+1}=R_i \cup pred(R_i)$.
Note that if we have strong compatibility and effective pred-basis, we can compute $pred^*(R)$ for any upward-closed 
set $R$. 
If we can furthermore check intersection of upward-closed sets with initial states (which is easy for counter 
systems), then reachability of arbitrary upward-closed sets is decidable. 

The following lemma, like Lemma~\ref{lem:disjWSTS}, can be considered folklore. We present it here mainly to show 
\emph{how} we can effectively compute the predecessors, which is an important ingredient of our model checking 
algorithm. 

\begin{restatable}{lemma}{lemmaTwo}\label{lemma:basis-pred}
	Let $M=(\cstateset,\cstateset_0,\Trans)$ be a counter system for guarded process templates $A,B$. Then 
	$(M,\lessapprox)$ has effective $pred$-basis.
\end{restatable}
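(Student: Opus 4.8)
The plan is to give an explicit algorithm that, on input a finite set $R \subseteq \cstateset$, returns a finite basis of ${\uparrow}pred({\uparrow}R)$. Since ${\uparrow}R = {\uparrow}minBasis(R)$, I would first replace $R$ by its minimal basis, and then it suffices to handle each single configuration $(\state'_A,\vcp) \in R$ separately: a basis of ${\uparrow}pred({\uparrow}R)$ is the union over all $(\state'_A,\vcp) \in R$ of bases of ${\uparrow}pred({\uparrow}\{(\state'_A,\vcp)\})$. So the core task is: given a target $(\state'_A,\vcp)$, compute finitely many configurations whose upward closure equals ${\uparrow}pred({\uparrow}(\state'_A,\vcp))$.

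The key step is a case analysis over the two kinds of transitions of the counter system (the $A$-move and the $B$-move), each combined with the four clauses (a)--(d) of the guard satisfaction relation. For each local transition $(\state_A,g,\state'_A) \in \delta_A$ with $g = \{\state_t\}$ (using that guards are wlog singletons), a predecessor of a state $\lessapprox$-above $(\state'_A,\vcp)$ is a state $(\state_A,\vc)$ with $\vc \ord \vcp$ (the counters are untouched by an $A$-move) that additionally satisfies $g$; the constraint ``$\exists$ a $B$-process in $\state_t$'' is itself upward-closed, so the minimal such $\vc$ is either $\vcp$ itself (if $\vcp(t)\geq 1$) or $\vcp + \vu_t$ (if $\vcp(t)=0$). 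Symmetrically, for a local transition $(\state_i,\{\state_t\},\state_j) \in \delta_B$ landing in a state above $(\state'_A,\vcp)$, a predecessor configuration has $\state_A = \state'_A$ and counter vector $\vc$ with $\vc - \vu_i + \vu_j \ord \vcp$ and $\vc(i) \geq 1$; the minimal choices of $\vc$ are $\max(\vu_i,\, \vcp + \vu_i - \vu_j)$, again adjusted by possibly adding $\vu_t$ to witness the guard when the relevant state is not already populated, and with the extra subtlety that clause (c) requires the witness state to be distinct from the mover's source/target and clause (d) requires two processes in the same state. Collecting all these finitely many vectors --- one bounded family per local transition --- yields a finite set; taking its upward closure gives exactly ${\uparrow}pred({\uparrow}(\state'_A,\vcp))$, because every predecessor is $\lessapprox$-above one of the listed minimal witnesses, and conversely each listed witness is a genuine predecessor of some state $\lessapprox$-above the target. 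This is essentially the computation spelled out in the displayed $CBasis$ formula in the paper.

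The main obstacle is purely bookkeeping rather than conceptual: getting the guard-witness cases exactly right, in particular the interaction between ``which state must be populated to enable the guard'' and ``which state the $B$-process moves out of'', so that one does not accidentally demand $\vc(i) \geq 1$ \emph{and} a separate witness in $\state_i$ when clause (d) would already be satisfied, and conversely does not miss the case where the guard witness coincides with $\state_i$ and hence needs multiplicity $\geq 2$. I would organize this by writing, for each transition, the predecessor set as $\{\vc \mid \vc \ord v_k\ \text{for some }k\}$ for explicit vectors $v_k$ built from $\vcp$, $\vu_i$, $\vu_j$, $\vu_t$, verify upward-closedness of each constraint used, and then argue soundness (each $v_k$ is a real predecessor) and completeness (each predecessor dominates some $v_k$) in one line each. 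Finiteness and effectiveness are then immediate, since $\delta_A$ and $\delta_B$ are finite and each transition contributes a constant number of vectors.
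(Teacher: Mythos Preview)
Your proposal is correct and follows essentially the same approach as the paper: both proceed by a case split over local $A$- and $B$-transitions, compute for each a finite family of minimal predecessor vectors built from $\vcp$, $\vu_i$, $\vu_j$, $\vu_t$ (adding $\vu_t$ when the guard witness is not already present), and argue soundness plus completeness of the resulting finite set. Your formulation via $\max(\vu_i,\vcp+\vu_i-\vu_j)$ is equivalent to the paper's explicit case distinction on whether $\vcp(j)=0$, and your attention to clauses (c)/(d) corresponds to the paper's condition $f$; the paper packages all of this into the single displayed set $CBasis$ and then proves $CBasis \supseteq minBasis(pred({\uparrow}R))$ by contradiction rather than by your direct domination argument, but the content is the same.
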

\begin{comment}

\begin{IEEEproof}
		Let $R \subseteq \cstateset$ be finite, $f=( (t =j \land \vcp(j)=1) \lor (\vcp(t) \geq 1 \land \vcp(j) = 0) )$. Then one can prove that ${\uparrow}CBasis={\uparrow}pred({\uparrow}R)$, for the following set $CBasis$:
	$$
	\cbasiss
$$
	\end{IEEEproof}
\end{comment}
\subsection{Model Checking Algorithm}
\label{sec:MCalgo}
\ms{It is important to compute that
	complexity of the algorithm especially if it turned out to be smaller than the
	pairwise systems which is double exponential...(though I doubt it)}
\noindent Our model checking algorithm is based on the known backwards reachability 
algorithm for WSTS~\cite{abdulla1996general}. We present it in detail to show how it stores intermediate results to return an \emph{error sequence}, from which we derive concrete error paths. 

\begin{algorithm}[h]
	\caption{Parameterized Model Checking}\label{alg:ParamMC}
	
	\begin{algorithmic}[1]		
		\Procedure{ModelCheck}{\emph{Counter System} $M$,$ERR$}
		\State $tempSet \gets ERR$, $E_0 \gets ERR$,
		$i \gets 1$, $visited \gets \emptyset$ \label{line:mc-initialise}
		{\color{brown}// A fixed point is reached if $visited = tempSet$} 
		\While{$tempSet \neq visited$} \label{line:mc-while}
		\State $visited \gets tempSet$ \label{line:mc-update}
		\State $E_i \gets minBasis(pred({\uparrow}E_{i-1}))$ \label{line:mc-pred} \\
		{\color{brown} //$pred$ is computed as in the proof of Lemma~\ref{lemma:basis-pred}} 
		\If{$E_i \cap \cstateset_0 \neq \emptyset$}\label{line:mc-intersect} \ \ {\color{brown} //intersect with initial states?}
		\State \textbf{return} $False,\{E_0,\ldots,E_i \cap \cstateset_0\}$\label{line:mc-error-sets}
		\EndIf
		\State $tempSet \gets minBasis(visited \cup E_i)$ \label{line:mc-update-temp} 
		\State $i \gets i+1$ 
		\EndWhile
		\State \textbf{return} $True,\emptyset$	\label{line:mc-true}			
		\EndProcedure
	\end{algorithmic}	
\end{algorithm}

Given a counter system $M$ and a finite basis $ERR$ of the set of error states,
algorithm \ref{alg:ParamMC} iteratively computes the set of predecessors until it reaches an initial state, or a fixed point.
The procedure returns either $True$, i.e. the system is safe, or an \emph{error sequence} 
$E_0,\ldots,E_k$, where $E_0 = ERR$, $\forall 0 < i < k: E_i = minBasis({\uparrow}pred({\uparrow}
E_{i-1}))$, and $E_k = minBasis({\uparrow}pred({\uparrow} E_{k-1})) \cap \cstateset_0$. That is, every $E_i$ is the minimal 
basis of the states that can reach $ERR$ in $i$ steps.
%A detailed algorithm can be found in Appendix \ref{app:mc_alg}.
%The steps of the algorithm are explained in more detail in the following.

\smartpar{Properties of Algorithm~\ref{alg:ParamMC}.}
Correctness of the algorithm follows from the correctness of the algorithm by
Abdulla et al.~\cite{abdulla1996general}, and from Lemma~\ref{lemma:basis-pred}. 
Termination follows from the fact that a non-terminating run would produce an infinite minimal basis, 
which is impossible since a minimal basis is an antichain.
%The complexity of such algorithms, which effectively check coverability in a VASS, is known to be EXPSPACE-complete~\cite{Rackoff78,Lipton76}.
\sj{mention complexity here?}

%\textsc{ModelCheck}: Line \ref{line:mc-initialise}
%initializes $tempSet$ and $E_0$ with $ERR$, $i$ to 1, and $visitedSet$ (the set of visited states) with
%$\emptyset$.
%In Line \ref{line:mc-while} we enter a while loop that is left when a
%fixed point is reached, and 
%Line \ref{line:mc-update} updates the set of visited states to include what we have computed in the last iteration of the loop, stored in $tempSet$. Line
%\ref{line:mc-pred} computes the next error set $E_i$, based on the construction used in the proof of Lemma~\ref{lemma:basis-pred}.
%Line \ref{line:mc-intersect} checks if the intersection with the initial states is non-empty, and the computed error sequence is returned if this is true.
%Line \ref{line:mc-update-temp} adds the $E_i$ that has been computed in this iteration to $visitedSet$ and updates $tempSet$ with the minimal basis of the result. If $visitedSet = tempSet$ at this point, we have reached a fixed point and Line
%\ref{line:mc-true} returns True.
%

%\vspace{-1em}

\smartpar{Example.} Consider the reader-writer system in Figures~\ref{fig:Writer} 
and \ref{fig:Reader}. Suppose the error states are all states
where the writer is in $w$ while a reader is in $r$. In other words, the error
set of the corresponding counter system $M$ is ${\uparrow}E_0$ where $E_0 = \{ (w,(0,1)) \}$ and
$(0,1)$ means zero reader-processes are in $nr$ and one in $r$. Note that ${\uparrow}E_0=\{(w,(i_0,i_1)) \mid (w,(0,1)) \lessapprox (w,(i_0,i_1))\}$, i.e. all elements with the same $w$, $i_0 \geq 0$ and $i_1 \geq 1$. If we run
Algorithm \ref{alg:ParamMC} with the parameters $M,\{(w,(0,1))\}$, we get the
following error sequence:
$E_0=\{(w,(0,1))\}$, $E_1=\{(nw,(0,1)) \}$,
$E_2=\{(nw,(1,0)) \}$, with $E_2 \cap \cstateset_0
\neq \emptyset$, i.e., the error is reachable.

%\begin{figure}[t]

	%\scalebox{1}{
	%	\input{disj-reader-writer-tikz}
%	}
%	\label{repair-fig:disj-reader-writer}
%\end{figure} 

\subsection{Deadlock Detection in Disjunctive Systems}\label{sec:deadlock} 
The repair of concurrent systems is much harder than fixing monolithic systems. One of the 
sources of complexity is that a repair might introduce a deadlock, 
which is usually an unwanted behavior. In this section we show how we can detect deadlocks in disjunctive systems.

Note that a set of 
deadlocked states is in general not upward-closed under $\lessapprox$ (defined in Sect.~\ref{sec:disj-wsts}): let $\cstate=(\state_A,\vc), 
r=(\state_A,\vec{d})$ be global states with $\cstate \lessapprox r$. If $\cstate$ is deadlocked, then $\vc(i)=0$ 
for every $\state_i$ that appears in a guard of an outgoing local transition from $\cstate$. Now if $\vec{d}(i)>0$ for 
one of these $\state_i$, then some transition is enabled in $r$, which is therefore not deadlocked. 

A natural idea is to refine the wqo such that deadlocked states are upward closed. 
To this end, consider $\ord_0 \subseteq \Nat_0^{|B|} \times \Nat_0^{|B|}$ where 
$$\vc \ord_0 \vec{d}\ \Leftrightarrow\ \left(\vc \ord \vec{d} \land \forall i \leq |B| : \left( \vc(i)=0 \Leftrightarrow 
\vec{d}(i)=0 \right) \right),$$ 
and $\lessapprox_0 \; \subseteq \cstateset	\times \cstateset$ where $(\state_{A}, \vc) \lessapprox_0 (\state'_{A}, \vec{d})\ \Leftrightarrow\ \left( \state_{A} = \state'_{A} \land \vc \ord_0 
	\vec{d} \right).$
	
Then, deadlocked states are upward closed with respect to $\lessapprox_0$. 
However, it is not easy to adopt the WSTS approach to this case, since for our counter systems $pred(R)$ will in general not be upward closed if $R$ is upward closed.
Instead of using $\lessapprox_0$ to define a WSTS, in the following we will use it to define a counter abstraction (similar to the approach of Pnueli et al.~\cite{PnueliXZ02}) that can be used for deadlock detection.

The idea is that we use vectors with counter values from $\{0, 1\}$ to represent their upward closure with respect to $\lessapprox_0$. 
These upward closures will be seen as abstract states, and in the usual way define that a transition between abstract states $\hat{s},\hat{s}'$ exists iff there exists a transition between concrete states $s \in {\uparrow}\hat{s}, s' \in {\uparrow}\hat{s}'$. 
We formalize the abstract system in the following, assuming wlog. that $\delta_B$ does not contain transitions of the form $(\state_i,\{\state_i\},\state_j)$, i.e., transitions from $\state_i$ that are guarded by $\state_i$.\footnote{A system that does not satisfy this assumption can easily be transformed into one that does, with a linear blowup in the number of states, and preserving reachability properties including reachability of deadlocks.}

\smartpar{$01$-Counter System.}
 For a given counter system $M$, we define the \emph{$01$-Counter System} $\hat{M}=(\hat{\cstateset},\hat{\cstate_0}, \hat{\Trans})$, where:
 \begin{itemize}[noitemsep,topsep=1pt]
 	\item $\hat{\cstateset} \subseteq \stateset_A \times \{0,1\}^{|B|}$ is the set of
 	states,
 	\item $\hat{\cstate}_0=(\init_A,\vc)$ with $\vc(\state) = 1 \text{ iff } \state	= \init_B$ is the initial state, 
 	\item $\hat{\Trans}$ is the set of transitions $((\state_A,\vc), (\state'_A,\vc'))$
 	s.t. one of the following holds:
 	\begin{enumerate}
 		\item $\vc=\vc' \land \exists (\state_A,g,\state'_A) \in \delta_A: (\state_A,\vc) \modelsg{\state_A}
 		g$ (transition of $A$)
 		\item $\state_A = \state_A' \land \exists (\state_{i},g,\state_{j}) \in \delta_B: (\state_A,\vc) \modelsg{\state_i} g
 		\land \vc(i) = 1 \land$\\
 		 $[(\vc(j) = 0 \land (\vc' = \vc - \vec{u}_i +   \vec{u}_j \lor \vc' = \vc +   \vec{u}_j)) \lor$\\
 		 $(\vc(j) = 1 \land (\vc' = \vc - \vec{u}_i \lor \vc' = \vc))]$
 		 (transition of a $B$-process) 
 	\end{enumerate}
 \end{itemize}

Define runs and deadlocks of a $01$-counter system similarly as for counter systems.
For a state $s=(\state_A,\vc)$ of $M$, define the corresponding abstract state of $\hat{M}$ as $\abs(s)=(\state_A,\hat{\vc})$ with $\hat{\vc}(i)=0$ if $\vc(i)=0$, and $\hat{\vc}=1$ otherwise.

\begin{restatable}{theorem}{theoremDeadlock}\label{theorem:deadlock}
The $01$-counter system $\hat{M}$ has a deadlocked run if and only if the counter system $M$ has a deadlocked run.
\end{restatable}

\paragraph*{Proof idea}
%If $x = s_1, s_2, \ldots, s_f$ is a deadlocked run of $M$, it is easy to see that $\hat{x} = \abs(s_1), \abs(s_2), \ldots, \abs(s_f)$ is a deadlocked run of $\hat{M}$.
%Conversely, if $\hat{x} = \abs(s_1), \abs(s_2), \ldots, \abs(s_f)$ is a deadlocked run of $\hat{M}$, then we can construct a deadlocked run of $M$ by starting with an exponential number of processes, and for each step of $\hat{M}$ we execute the corresponding transition $(q_i,g,q_j)$ in $M$ either (i) once (if it is a transition of $A$), (ii) until $q_i$ is empty (if $q_i$ becomes empty in $\hat{x}$), or (iii) until half of the processes from $q_i$ have moved to $q_j$. In this way, after any transition the same positions as in $\hat{x}$ will be occupied in our constructed run, allowing us to continue the construction and to finally arrive in a deadlocked state. \hfill $\blacksquare$
Suppose $x = s_1, s_2, \ldots, s_f$ is a deadlocked run of $M$. 
Note that for any $s \in \cstateset$, a transition based on local transition $t_U \in \delta_U$ is enabled if and only if a transition based on $t_U$ is enabled in the abstract state $\abs(s)$ of $\hat{M}$.
Then it is easy to see that $\hat{x} = \abs(s_1), \abs(s_2), \ldots, \abs(s_f)$ is a deadlocked run of $\hat{M}$.

Now, suppose $\hat{x} = \hat{s}_1, \hat{s}_2, \ldots, \hat{s}_f$ is a deadlocked run of $\hat{M}$.
Let $b$ be the number of transitions $(\hat{s}_k, \hat{s}_{k+1})$ based on some $t_B=(q_i,g,q_j) \in \delta_B$ with $\hat{s}_{k+1}(i)=1$, i.e., the transitions where we keep a $1$ in position $i$. 
Furthermore, let $t_1, \ldots, t_{f-1}$ be the sequence of local transitions that $\hat{x}$ is based on.
Then we can construct a deadlocked run of $M$ in the following way:
We start in $s_1=(\init_A,\vc_1)$ with $\vc_1(\init_B)=2^b$ and for every $t_k$ in the sequence do:\footnote{Note that a similar, but more involved construction is also possible with $\vc_1(\init_B)=b$.} 
\begin{itemize}
	\item if $t_k \in \delta_A$, we take the same transition once,
	\item if $t_k=(q_i,g,q_j) \in \delta_B$ with $\hat{s}_{k+1}(i)=0$, we take the same local transition until position $i$ becomes empty, and  
	\item if $t_k=(q_i,g,q_j) \in \delta_B$ with $\hat{s}_{k+1}(i)=1$, we take the same local transition $\frac{c}{2}$ times, where $c$ is the number of processes that are in position $i$ before (i.e., we move half of the processes to $j$, and keep the other half in $i$).
\end{itemize}
By construction, after any of the transitions in $t_1, \ldots, t_{f-1}$, the same positions as in $\hat{x}$ will be occupied in our constructed run, thus the same transitions are enabled. 
Therefore, the constructed run ends in a deadlocked state. 
\IEEEQED

\begin{corollary}\label{cor:disj-dead}
Deadlock detection in disjunctive systems is decidable in EXPTIME (in $|Q_B|$).
\end{corollary}

%Based on these results, given a set of deadlocked states $DEAD \subseteq \cstateset^d$, we can run a variant of Algorithm \ref{alg:ParamMC} to check the reachability of $DEAD$.
%The following two changes are needed:
%\begin{itemize}
%	\item Replace $minBasis(pred({\uparrow}E_{i-1}))$ in Line \ref{line:mc-pred} with: $pred(E_{i-1})$, and
%	\item replace $minBasis(visitedSet \cup E_i)$ in Line \ref{line:mc-update-temp} with $(visitedSet \cup E_i)$.
%\end{itemize}

\smartpar{An Algorithm for Deadlock Detection.}
Now we can modify the model-checking algorithm to detect deadlocks in a $01$-counter system $\hat{M}$:  
instead of passing a basis of the set of errors in the parameter $ERR$, we pass a finite set of deadlocked states $DEAD \subseteq \hat{\cstateset}$, and predecessors can directly be computed by $pred$. 
Thus, an \emph{error sequence} is of the form
$E_0,\ldots,E_k$, where $E_0 = DEAD$, $\forall 0 < i < k: E_i = pred(E_{i-1})$, and $E_k = E_{k-1} \cap \cstateset_0$.

%In particular, note that our algorithm is again EXPSPACE-complete, while the best known existing solution is TOWER-hard (as 
%explained in Sect.~\ref{sec:problem}).

\section{Parameterized Repair Algorithm}
\label{sec:repair-alg}
Now, we can introduce a parameterized repair algorithm that interleaves the backwards
model checking algorithm
(Algorithm~\ref{alg:ParamMC}) with a forward reachability analysis and 
the computation of candidate repairs. 

\smartpar{Forward Reachability Analysis.}
In the following, for a set $R \subseteq \cstateset$, let $Succ(R)=\{\cstate' \in \cstateset \mid \exists \cstate \in 
R: \transition{\cstate}{}{\cstate'}\}$. 
Furthermore, for $\cstate \in \cstateset$, let $\Trans^{local}(\cstate,R)=\{t_U \in \trans \mid t_U 
\in \Delta^{local}(\cstate) \land \Trans(\cstate,t_U) \in R\}$.

Given an error sequence $E_0, \ldots , E_k$, let the \emph{reachable error sequence} $\mR\mE = RE_0, \ldots , RE_k$ be defined 
by $RE_k = E_k$ (which by definition only contains initial states), and $RE_{i-1} = Succ(RE_{i}) \cap 
{\uparrow}E_{i-1}$ for $1 \leq i \leq k$. That is, each $RE_i$ contains a set of states that can reach ${\uparrow}ERR$ in $i$ 
steps, and are reachable from $\cstateset_0$ in $k-i$ steps. Thus, it represents a set of concrete error paths of 
length $k$.

\smartpar{Constraint Solving for Candidate Repairs.}
The generation of candidate repairs is guided by constraints over the local transitions $\trans$ 
as atomic propositions, such that  
a satisfying assignment of the constraints 
corresponds to the candidate repair, where only transitions that are assigned 
\textsf{true} remain in $\trans'$.
%The generation of candidate solutions is guided by a constraint solving approach: a SAT solver receives 
%constraints over the local transitions $\trans$ as atomic propositions, 
%and a satisfying assignment of the constraints 
%corresponds to the candidate repair,
%where only transitions that are assigned 
%\textsf{true} remain in $\trans'_A \cup \trans'_B$. 
During an execution of the algorithm, these constraints ensure 
that all error paths discovered so far will be avoided, and include a set of fixed constraints 
that express additional desired properties of the system, as explained in the following.

\smartpar{Initial Constraints.} To avoid the construction of repairs
that violate the totality assumption on the transition relations of the process
templates, every repair for disjunctive systems has to satisfy the following constraint:
$$TRConstr_{Disj} = \bigwedge_{\state_{A} \in \stateset_A} \bigvee_{t_A \in
	\trans_A(\state_{A})}  t_A \land \bigwedge_{\state_{B} \in \stateset_B}
\bigvee_{t_B
	\in \trans_B(\state_{B})}  t_B $$
Informally, $TRConstr_{Disj}$ guarantees that a candidate repair returned by the
SAT solver never removes all local transitions of a local state in $\stateset_A
\cup
\stateset_B$.
Furthermore a designer can add constraints that are needed to obtain a repair that conforms with their requirements, for example to ensure that certain states remain reachable in the repair (see Appendix~\ref{sec:beyond-disj} for more examples). %the extended version~\cite{JSV21} 
%In Sect.~\ref{sec:beyond-disj}, we provide similar constraints for pairwise and broadcast systems.

\smartpar{A Parameterized Repair Algorithm.}
Given a counter system $M$, a basis $ERR$ of the error states, 
and initial Boolean constraints $initConstr$ on the transition relation (including at least $TRConstr_{Disj}$), 
Algorithm~\ref{alg:ParamRepair} returns either a
\emph{repair} $\trans'$ or the string $Unrealizable$ to denote that no repair exists.

%\vspace{-1em}
\begin{algorithm}[h]
	\caption{Parameterized Repair}\label{alg:ParamRepair}
	\begin{algorithmic}[1]
		\Procedure{ParamRepair}{$M$, $ERR$, $InitConstr$}
		\State $accCnstr \gets InitConstr$, $isCorrect \gets False$ \label{line:initialize}
		\While{$isCorrect = False$}\label{line:while}		
		\State $isCorrect, [E_0,\dots, E_k] \gets MC(M,ERR)$\label{line:mc}
		\If{$isCorrect = False$}
		%\State $RE_k \gets E_k \cap \cstateset_0$  \label{line:RE0}
		\State $RE_k \gets E_k$\ \ {\color{brown} //$E_k$ contains only initial states}
		\State $RE_{k-1} \gets Succ(RE_k) \cap \nuparrow E_{k-1},\ldots,\newline {\color{white}................} RE_{0} \gets Succ(RE_1) \cap \nuparrow E_{0}$ \label{line:REs}\\
		 {\color{brown} //for every initial state in $RE_k$ compute its constraints}
		\State $newConstr \gets \bigwedge_{\cstate \in RE_k} \newline {\color{white}................} BuildConstr(\cstate,[RE_{k-1},\ldots,RE_0]\})$ 
		\label{line:newConstr} \\
		{\color{brown} //accumulate iterations' constraints}
		\State $accCnstr \gets newConstr \land accCnstr$ \label{line:accumulate}
		\\
		{\color{brown} //reset deadlock constraints}
		\State $ddlockCnstr \gets True$
		\State $\delta',isSAT \gets SAT(accCnstr \land ddlockCnstr)$\label{line:assignment}
		\If{$isSAT = False$}
		\State \textbf{return} $Unrealizable$\label{line:unreal}
		\EndIf
		{\color{brown} //compute a new candidate using the repair $\delta'$}
		\State $M = Restrict(M,\delta')$\label{line:restrict}\\
		{\color{brown} //if M reaches a deadlock get a new repair}		
		\If{$HasDeadlock(M)$}\label{line:deadlockcheck}
		\State $ddlockCnstr \gets \neg \delta'  \land  ddlockCnstr$
		\State jump to line \ref{line:assignment}
		\EndIf
		\EndIf
		\State \textbf{else} \textbf{return} $\delta'$\label{line:repair}
		\ \ {\color{brown} //a repair is found!} \label{line:repair}
		\EndWhile
		\EndProcedure
	\end{algorithmic}	
	\begin{algorithmic}[1]	
	\Procedure{BuildConstr}{State $\cstate$, $\mR\mE$}
	\State {\color{brown} //$s$ is a state, $\mR\mE[1:]$ is a list obtained by removing the first element from $\mR\mE$}
	\If{$\mR\mE[1:]$ is empty} \newline
	{\color{brown} //if $t_U \in \Trans^{local}(\cstate)$ leads directly to error set, delete it~($\neg t_U$ must set to true by the SAT solver)}
	\State \textbf{return} $\bigwedge_{t_U \in \Trans^{local}(\cstate,\mR\mE[0])} \neg t_U$ \label{line:case 1}
	
	\Else \newline {\color{brown}//else either delete $t_U$ or delete outgoing transitions of the target state of $t_U$ recursively} 
	\State \textbf{return} $\bigwedge_{t_U \in \Trans^{local}(\cstate,\mR\mE[0])}(\neg t_U \lor \newline
	{\color{white}......................}BuildConstr(\Trans(\cstate,t_U),\mR\mE[1:]))$ \label{line:case 2}
	\EndIf
	\EndProcedure
\end{algorithmic}
\end{algorithm}
%\vspace{-2.65em}

\smartpar{Properties of Algorithm~\ref{alg:ParamRepair}.}
\label{sec:alg-correctness}

\begin{theorem}[Soundness]
	For every repair $\delta'$ returned by Algorithm~\ref{alg:ParamRepair}:
	\begin{itemize}[noitemsep,topsep=3pt]
	\item $Restrict(M,\delta')$ is safe, i.e., ${\uparrow}ERR$ is not reachable, and 
	\item the transition relation of $Restrict(M,\delta')$ is total in	the first two arguments.
	\end{itemize}
\end{theorem}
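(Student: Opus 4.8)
The plan is to establish the two claims separately, both by arguing about the invariants maintained by the main loop of Algorithm~\ref{alg:ParamRepair}. For the totality claim, observe that every candidate repair $\delta'$ is obtained as a satisfying assignment of $accumConstr \land initConstr$ on line~\ref{line:assignment}, and by hypothesis $initConstr$ implies $TRConstr_{Disj}$. A satisfying assignment of $TRConstr_{Disj}$ selects, for every $\state_A \in \stateset_A$ and every $\state_B \in \stateset_B$, at least one outgoing local transition; hence $\delta'_A(\state_A) \neq \emptyset$ and $\delta'_B(\state_B) \neq \emptyset$ for all local states, which is exactly totality of the transition relation of $Restrict(M,\delta')$ in its first two arguments. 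This part is routine.

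For the safety claim, I would argue by contradiction: suppose $\delta'$ is returned (so the final $ModelCheck$ call on line~\ref{line:mc} reported $isCorrect = True$), yet ${\uparrow}ERR$ is reachable in $Restrict(M,\delta')$. By correctness of Algorithm~\ref{alg:ParamMC} (established via Abdulla et al.~\cite{abdulla1996general} and Lemma~\ref{lemma:basis-pred}), $ModelCheck(M',ERR)$ returns $True$ iff ${\uparrow}ERR$ is unreachable in $M' = Restrict(M,\delta')$. Since $M'$ on line~\ref{line:mc} is exactly $Restrict(M,\delta')$ from the previous iteration's line~\ref{line:restrict}, this is an immediate contradiction — so the real content is simply invoking correctness of the model checker and noting the algorithm only exits the loop via line~\ref{line:repair} when $isCorrect = True$.

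The genuinely nontrivial ingredient, which I would state and use as a lemma, is that $BuildConstr$ faithfully encodes "avoid all concrete error paths of length $k$ represented by $\mR\mE$": for any restriction $\delta''$ of $M$ whose transition set satisfies $\bigwedge_{\cstate \in RE_k} BuildConstr(\cstate, [RE_{k-1},\dots,RE_0])$, no path of $Restrict(M,\delta'')$ starting in $RE_k \subseteq \cstateset_0$ and following $RE_k, RE_{k-1}, \dots, RE_0$ reaches ${\uparrow}ERR$. This is proved by induction on the recursion depth of $BuildConstr$ (equivalently, on the list length): the base case (line~\ref{line:case 1}) removes every transition from $\cstate$ into $RE_0$; the inductive case (line~\ref{line:case 2}) ensures that for each transition $t_U$ from $\cstate$ into $\mR\mE[0]$, either $t_U$ itself is removed or, recursively, all continuations from $\Trans(\cstate,t_U)$ along the remaining suffix are blocked. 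Because $accumConstr$ conjoins these constraints across all iterations, any SAT solution — and in particular the returned $\delta'$ — blocks every reachable error path discovered so far; combined with the model checker's final verdict $isCorrect = True$, which certifies no error path remains, we conclude $Restrict(M,\delta')$ is safe.

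I expect the main obstacle to be making the $BuildConstr$ correctness lemma precise: one must carefully match the recursive structure of $BuildConstr$ against the layered sets $RE_i$ of the reachable error sequence, handle the fact that $\Trans^{local}(\cstate,\mR\mE[0])$ ranges over possibly many transitions (so the constraint is a conjunction over branchings), and confirm that $\mR\mE$ as computed on line~\ref{line:REs} via $RE_{i-1} = Succ(RE_i) \cap {\uparrow}E_{i-1}$ indeed captures exactly the reachable length-$k$ error paths through ${\uparrow}ERR$. The subtlety is that a single counterexample iteration only forbids the specific length-$k$ witnesses found, not all error paths of all lengths — but this is fine for soundness, since the loop terminates only when the model checker independently certifies global safety.
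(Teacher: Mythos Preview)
Your first two paragraphs are correct and match the paper's proof almost exactly: safety follows solely from the fact that the algorithm returns $\delta'$ only after $ModelCheck(M',ERR)$ reports $isCorrect = True$ on $M' = Restrict(M,\delta')$, and totality follows because every candidate satisfies $initConstr$, which includes $TRConstr_{Disj}$. The paper's proof is literally these two observations and nothing more.

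Your third paragraph, however, is superfluous for \emph{this} theorem. The correctness of $BuildConstr$ plays no role in soundness: even if $BuildConstr$ produced arbitrary (satisfiable) constraints, any $\delta'$ returned at line~\ref{line:repair} would still be safe, because safety is certified independently by the model checker in the final iteration. You yourself note that ``the real content is simply invoking correctness of the model checker and noting the algorithm only exits the loop via line~\ref{line:repair} when $isCorrect = True$'' --- and that is where the soundness proof ends. The $BuildConstr$ analysis you sketch is instead the heart of the \emph{completeness} proof (the next theorem in the paper): there one must show that every excluded $\delta'$ genuinely admits an error path, which is exactly what the recursive structure of $BuildConstr$ guarantees. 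So your lemma is not wrong, just misplaced; save it for completeness and keep the soundness argument to the two lines it deserves.
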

\begin{IEEEproof}
	The parameterized model checker guarantees that the transition relation is safe, i.e., ${\uparrow}ERR$ is not 
reachable. 
	Moreover, the transition relation constraint $TRConstr$ is part of $initConstr$ and guarantees that, for any
	candidate repair returned by the SAT solver, the transition relation is total.	
\end{IEEEproof}

%Given an error sequence $\mE=\{E_0,\ldots,E_k\}$, we define an \emph{error
%tree} $ErrTree_{\mE}$ as an acyclic, directed, rooted graph $(V,Ed,v_0)$ where
%$\forall v \in V \:\: \exists i \leq k$ s.t. $v \in E_i$, $Ed \subseteq V
%\times V \subseteq \Trans$, and $\forall v \in V$ if $v$ is a leaf node then $v
%\in E_0$.
%The length of the simple path from the root $v_0$ to an arbitrary node $v$ is
%the \emph{depth} of $v$.
%A \emph{level} of a tree consists of all nodes at the same depth.

\begin{theorem}[Completeness]
	If Algorithm~\ref{alg:ParamRepair} returns “Unrealizable”, then the parameterized system has no repair.
	\end{theorem}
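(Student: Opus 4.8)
The plan is to show the contrapositive, or rather a direct argument: if a repair $\delta'' \subseteq \delta$ exists (i.e. $Restrict(M,\delta'')$ does not reach ${\uparrow}ERR$), then the SAT call on line~\ref{line:assignment} never returns $\mathit{isSAT} = \mathit{False}$, so the algorithm never outputs ``Unrealizable''. The core claim is an invariant: \emph{every repair $\delta''$ (as a Boolean assignment) satisfies $accumConstr \land initConstr$ at every iteration of the while loop}. Since $\delta''$ exists by assumption, $accumConstr \land initConstr$ is always satisfiable, hence line~\ref{line:unreal} is never reached.

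First I would establish that any valid repair $\delta''$ satisfies $initConstr$: by definition $initConstr$ contains $TRConstr_{Disj}$, and since $\delta''$ is a genuine process-template pair with \emph{total} transition relations, for each local state it keeps at least one outgoing transition, which is exactly what $TRConstr_{Disj}$ demands. (Any designer-added constraints in $initConstr$ are, by assumption, constraints the repair must conform to, so this is by hypothesis.) Second, and this is the heart of the argument, I would show by induction on the iterations that $\delta'' \models accumConstr$. It suffices to show that in each iteration the freshly added conjunct $newConstr = \bigwedge_{\cstate \in RE_k} BuildConstr(\cstate, [RE_{k-1},\ldots,RE_0])$ is satisfied by $\delta''$. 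For this I would prove a lemma about $BuildConstr$: for an initial state $\cstate$ and the reachable error sequence segment $\mR\mE = [RE_j, \ldots, RE_0]$, if there is \emph{some} concrete path $\cstate = \cstate_0 \to \cstate_1 \to \cdots \to \cstate_{j+1}$ with each $\cstate_\ell \in {\uparrow}E_{k-j+\ell}$ (equivalently, hitting the reachable error sets in order and ending in ${\uparrow}ERR$) that survives under $\delta''$, then $\delta'' \not\models BuildConstr(\cstate, \mR\mE)$ is \emph{false} — i.e., $BuildConstr$ is violated only by assignments that keep at least one such full error path alive. Contrapositively: any $\delta''$ that kills all error paths to ${\uparrow}ERR$ must satisfy $BuildConstr(\cstate, \mR\mE)$. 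This is a straightforward structural induction on the recursion of $BuildConstr$: in the base case (line~\ref{line:case 1}) it forbids exactly the transitions from $\cstate$ into $RE_0 \subseteq {\uparrow}ERR$, so a repair avoiding ${\uparrow}ERR$ must drop all of them; in the recursive case (line~\ref{line:case 2}), for each enabled $t_U$ reaching $RE_0$, either $t_U$ is dropped, or — keeping it — the successor $\Trans(\cstate,t_U) \in RE_0 \subseteq {\uparrow}E_{k-1}$ still must not be able to reach ${\uparrow}ERR$, which by the induction hypothesis forces $BuildConstr(\Trans(\cstate,t_U), \mR\mE[1:])$. One needs here that $RE_0 \supseteq$ the actual set of states reachable from $\cstateset_0$ that lie in ${\uparrow}ERR$ along length-$k$ paths, which follows from the definition $RE_{i-1} = Succ(RE_i) \cap {\uparrow}E_{i-1}$ together with the soundness of the error sequence produced by $ModelCheck$ (each $E_i = minBasis(pred({\uparrow}E_{i-1}))$, so ${\uparrow}E_i$ is exactly the set of states reaching ${\uparrow}ERR$ in $i$ steps).

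Once the invariant is in place, the conclusion is immediate: if the algorithm reaches line~\ref{line:unreal}, then $accumConstr \land initConstr$ was found unsatisfiable; but the assumed repair $\delta''$ is a satisfying assignment by the invariant — contradiction. Hence no repair exists whenever ``Unrealizable'' is returned. The main obstacle I anticipate is getting the $BuildConstr$ lemma exactly right: one must be careful that $BuildConstr$ ranges over $\Trans^{local}(\cstate, \mR\mE[0])$ — only the \emph{enabled local transitions of the current candidate $M'$} whose target lies in the next reachable error set — rather than over all of $\delta$, and argue that this restricted bookkeeping still captures every error path that a repair $\delta'' \subseteq \delta_{M'}$ could leave alive; since $\delta'' \subseteq \delta' \subseteq \delta$ and $M'$ is the current restriction, any error path under $Restrict(M,\delta'')$ is in particular a path of $M'$, so it is visible to $BuildConstr$. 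A secondary subtlety is that completeness as stated is relative to repairs that also satisfy the designer's extra constraints in $initConstr$; if one wants unconditional completeness for the bare parameterized repair problem, $initConstr$ must be taken to be exactly $TRConstr_{Disj}$, which is consistent with how the theorem is phrased in the paper.
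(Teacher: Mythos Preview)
Your approach matches the paper's: both establish that any repair, viewed as a truth assignment over $\delta$, satisfies $accumConstr \land initConstr$ at every iteration. The paper phrases this by picking an arbitrary $\delta' \subseteq \delta$, locating the first iteration at which it fails the accumulated constraints, and arguing that failing the freshly added $newConstr$ lets one extract a concrete error path in $Restrict(M,\delta')$; this is exactly the contrapositive of your invariant, and your structural induction on $BuildConstr$ spells out what the paper leaves as ``by construction of $BuildConstr$''.

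Your discussion of the ``main obstacle'' is off, though. You assert $\delta'' \subseteq \delta'$ (with $\delta'$ the current SAT candidate) to argue that every error path of $Restrict(M,\delta'')$ is visible in $M'$. That inclusion is false in general --- an arbitrary repair need not refine the solver's current guess --- and, more to the point, you do not need it: you are worrying about the wrong direction. For completeness you must show that if $\delta''$ \emph{violates} $newConstr$ then $Restrict(M,\delta'')$ has an error path, not that every error path of $\delta''$ is encoded in $newConstr$. And that direction is immediate: violating $newConstr$ means $\delta''$ keeps every local transition along some concrete path from $\cstateset_0$ to ${\uparrow}ERR$ found in $M'$; since $M'$ is a restriction of $M$, that path lives in $M$, and since all its transitions lie in $\delta''$ it also lives in $Restrict(M,\delta'')$. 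No relation between $\delta''$ and $\delta'$ is required. Drop that paragraph and the rest of your plan goes through.
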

	\begin{IEEEproof}
		Algorithm \ref{alg:ParamRepair} returns "Unrealizable'' if $accCnstr \land initConstr$ has become unsatisfiable. 
		We consider an arbitrary $\delta' \subseteq \trans$ and show that it cannot be a repair. 
		Note that for the given run of the algorithm, there is an iteration $i$ of the loop such that $\delta'$, seen as an 
assignment of truth values to atomic propositions $\trans$, was a satisfying assignment of $accCnstr 
\land initConstr$ up to this point, and is not anymore after this iteration.

If $i=0$, i.e., $\delta'$ was never a satisfying assignment, then $\delta'$ does not satisfy $initConstr$ and can 
clearly not be a repair. 
If $i>0$, then $\delta'$ is a satisfying assignment for $initConstr$ and all constraints added before round $i$, but 
not for the constraints $\bigwedge_{\cstate \in RE_k} BuildConstr(\cstate,[RE_{k-1},\ldots,RE_0]\})$ added in this 
iteration of the loop, based on a reachable error sequence $\mR\mE=RE_k, \ldots, RE_0$. By construction of 
$BuildConstr$, this means we can construct out of $\delta'$ and $\mR\mE$ a concrete error path in 
$Restrict(M,\delta')$, and $\delta'$ can also not be a repair.
\end{IEEEproof}

%Note also that a reachable error sequence $\mR\mE$ of length $k$ represents \emph{all possible error paths} of length $k$, in systems of any size, and therefore these error paths will be avoided in all these systems.
%This property will be formalized in Lemma~\ref{lemma:faithful} below. 
%\begin{theorem}[Completeness]
%	If the parameterized system has no repair then Algorithm \ref{alg:ParamRepair}
%returns “Unrealizable”.
%\end{theorem}
%\begin{proof}
%	The system $M$ has no repair if one of the following cases hold:
%	\begin{itemize}[noitemsep,topsep=1pt]
%		\item If an error sequence $\mE=\{E_0,\ldots,E_k\}$ cannot be avoided i.e.
%		there exists a tree $T$ of states in $\mE$ with root the initial state in
%		$E_k$
%		and where each level of a tree corresponds to a set in the error sequence.
%		\item If avoiding two or more error sequences, represented in the accumulated
%		constraints $accConstr$, 	is impossible while respecting
%		the transition relation constraint $TRConstr$. 
%	\end{itemize}

%	Algorithm \ref{alg:ParamRepair} returns “Unrealizable” if $accConstr \land 
%	TRConstr$ is unsatisfiable which holds whenever one of the above cases occurs.
%	\qed
%\end{proof}

\begin{theorem}[Termination]
	Algorithm \ref{alg:ParamRepair} always terminates.
\end{theorem}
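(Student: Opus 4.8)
The plan is to derive termination from two independent facts: (i) every individual iteration of the \textbf{while} loop in Algorithm~\ref{alg:ParamRepair} halts, and (ii) the loop body is entered only finitely many times. Fact (i) is a matter of inspecting each line of the body, while fact (ii) is the real content and rests on a strictly decreasing measure attached to the accumulated constraint.

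For (i) I would argue line by line. The call $ModelCheck(M',ERR)$ halts by the termination property already established for Algorithm~\ref{alg:ParamMC}. The forward propagation in Line~\ref{line:REs} manipulates only finite sets: $RE_k = E_k$ is finite, and each $RE_{i-1} = Succ(RE_i) \cap {\uparrow}E_{i-1}$ is a subset of the finite set $Succ(RE_i)$ (a configuration has finitely many successors) and is effectively computable since membership in ${\uparrow}E_{i-1}$ is decidable. $BuildConstr$ recurses only on the strictly shorter tail $\mR\mE[1:]$ of its list argument, so its recursion depth is bounded by the length $k$ of the error sequence, and at each level it forms a finite Boolean combination ranging over the finite set $\Trans^{local}(\cstate,\cdot)$; hence $newConstr$ is a finite formula and the $SAT$ call in Line~\ref{line:assignment} terminates. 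So every line of the body terminates.

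For (ii) I would use the fact that the atomic propositions of the constraint system are the finitely many local transitions in $\trans$, so there are at most $2^{|\trans|}$ truth assignments, equivalently finitely many candidate transition relations $\delta' \subseteq \trans$. The key claim is that in every iteration that does not return, the set of assignments satisfying $accumConstr \land initConstr$ strictly decreases. Indeed, in such an iteration the candidate $\delta'$ entering the iteration (with $M' = Restrict(M,\delta')$; for the first iteration $\delta' = \trans$) is found unsafe, so $ModelCheck$ returns a genuine error sequence with $E_k \cap \cstateset_0 \neq \emptyset$; the reachable refinement $\mR\mE = RE_k,\dots,RE_0$ then exhibits, from some initial state $\cstate \in RE_k$, a concrete error path of $Restrict(M,\delta')$ ending in ${\uparrow}ERR$. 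By the same correspondence between $\mR\mE$, $BuildConstr$, and concrete error paths that underlies the soundness and completeness proofs, any transition relation retaining all transitions along that path violates $BuildConstr(\cstate,[RE_{k-1},\dots,RE_0])$; since $\delta'$ is exactly such a relation, $\delta'$ violates $newConstr$ and is therefore removed from the model set once $newConstr$ is conjoined to $accumConstr$ in Line~\ref{line:accumulate}. As constraints are only ever added, the model set is non-increasing over iterations and strictly decreasing in every non-returning one. A strictly decreasing chain of subsets of a finite set is finite, so only finitely many iterations are non-returning: after at most $2^{|\trans|}$ of them the model set is empty, the $SAT$ call fails, and the algorithm returns $Unrealizable$ in Line~\ref{line:unreal} --- unless it has already returned a repair. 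Either way the loop terminates.

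I expect the main obstacle to be making precise the step that the freshly added constraint $newConstr$ genuinely excludes the current candidate $\delta'$, since this hinges on the exact recursive meaning of $BuildConstr$ relative to the reachable error sequence $\mR\mE$ and the transitions present in $Restrict(M,\delta')$ --- the same technical correspondence that also drives soundness and completeness. I would therefore isolate that correspondence as a standalone lemma (``$\delta'$ satisfies $BuildConstr(\cstate,\mR\mE)$ iff $Restrict(M,\delta')$ admits no error path of the corresponding length from $\cstate$'') and invoke it both here and in the other proofs. A smaller point I would be careful to state explicitly is that $Succ$ maps finite sets to finite sets and that ${\uparrow}E_{i-1}$ has decidable membership, which is what keeps the forward-reachability step terminating and effective.
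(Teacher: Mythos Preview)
Your proof is correct and follows essentially the same approach as the paper: the core argument is that there are only $2^{|\trans|}$ candidate repairs and each non-terminating iteration excludes at least the current one, so the loop can execute only finitely often. Your version is considerably more detailed---you additionally argue termination of each individual line (part (i)) and spell out why $newConstr$ actually rules out the current $\delta'$, both of which the paper leaves implicit---but the underlying idea is identical.
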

\begin{IEEEproof}
	For a counter system based on $A$ and $B$, the number	of possible repairs is
	bounded by $2^{|\trans|}$.
	In every iteration of the algorithm, either the algorithm terminates, or it adds constraints 
that exclude at least the repair that is currently under consideration. Therefore, the algorithm will always 
terminate. 	
\end{IEEEproof}

\smartpar{What can be done if a repair doesn't exist?}
If Algorithm \ref{alg:ParamRepair} returns ``unrealizable'', then there is no 
repair for the 
given input. 
To still obtain a repair, a designer can add more non-determinism and/or 
allow for more 
communication between processes, and then run the algorithm again on the new 
instance of the system.
Moreover, unlike in monolithic systems, even if the result is ``unrealizable'', it may still be possible to 
obtain a solution that is good enough in practice.
For instance, we can change our algorithm slightly as follows:
When the SAT solver returns ``UNSAT'' after adding the constraints for an 
error sequence, instead of terminating we can continue 
computing the error sequence until a fixed point is reached. 
Then, we can determine the minimal number of processes $m_e$ that is needed 
for the last candidate repair to reach an error, and conclude that this 
candidate is safe for any system up to size $m_e-1$.
\section{Extensions}
\label{sec:extensions}

%\smartpar{Beyond Reachability.} 
%Algorithm~\ref{alg:ParamRepair} can also be used for repair with respect to 
%general safety properties, based on the automata-theoretic approach to model 
%checking. That is, a safety property $\varphi$ (on the behavior of a 
%fixed number of processes, e.g. process $A$ and $k$ copies of $B$) is encoded 
%into a finite-state automaton $\mA_\varphi$ that reads the run of these 
%processes in a system $\largesys$ and accepts all runs that violate $\varphi$. Then, we can consider 
%the product of the original system $M$ with the automaton $\mA_\varphi$ and 
%explicit copies of $B$ that appear in $\varphi$, and define a wqo on 
%this product that allows us to use Algorithm~\ref{alg:ParamMC} to check if 
%there are any executions that violate $\varphi$, and accordingly modify 
%Algorithm~\ref{alg:ParamRepair} to find a repair such that $\varphi$ is 
%satisfied. Details can be found in App.~\ref{sec:beyond-reach}.

\subsection{Beyond Reachability}
%\label{sec:beyond-reach}

Algorithm~\ref{alg:ParamRepair} can also be used for repair with respect to 
general safety properties,
based on the automata-theoretic approach to model checking.
We assume that the reader is familiar with finite-state automaton and with the automata-theoretic approach to model checking.
\begin{comment}

We define a safety property as a set of runs that lead to an erroneous behavior of the system.
This set of runs is encoded in a finite-state automaton. 

A \emph{finite-state automaton} is a tuple $\mA=(\stateset^{\mA},\state^{\mA}_0,
\Sigma,\delta^{\mA},\mF)$
where:
\begin{itemize}[noitemsep,topsep=1pt]
	\item $\stateset^{\mA}$ is a finite set of states,
	\item $\state^{\mA}_0$ is the initial state,
	\item $\Sigma$ is an input alphabet,
	\item $\delta^{\mA} \subseteq \stateset^{\mA} \times \Sigma \times \stateset^{\mA}$
	is a transition
	relation, and
	\item $\mF \subseteq \stateset^{\mA}$ is a set of accepting states.
\end{itemize}
A \emph{run} of the automaton is a finite sequence
$\state^{\mA}_0a_0\state^{\mA}_1a_1\state^{\mA}_2a_2 \ldots \state^{\mA}_{n+1}$
where
 $\forall i':\ (\state^{\mA}_i,a_i,\state^{\mA}_{i+1}) \in \delta^{\mA}$.
A run of the automaton is \emph{accepting} if it visits a state in $\mF$.
\end{comment}

\smartpar{Checking Safety Properties.}
Let $M=(\cstateset,\cstateset_0,\Trans)$ be a counter
system of process templates $A$ and $B$ that violates a safety property
$\varphi$ over the states of $A$, and let $\mA=(\stateset^{\mA},\state^{\mA}_0,
\stateset_A,\delta^{\mA},\mF)$ be the automaton that accepts all words over $\stateset_A$ that violate
$\varphi$.
To repair $M$, the composition $M \times \mA$ and the set of error states $ERR =
\{((\state_A,\vc),\state^{\mA}_{\mF})\mid (\state_A,\vc) \in \cstateset \land
\state^{\mA}_{\mF} \in \mF \}$ can be given as inputs to
the procedure $ParamRepair$.
%This technique is correct due to the following property that holds on the composition $M \times \mA$.

\begin{cor}
	Let $\ord_{\mA} \subseteq (M \times \mA) \times (M \times \mA)$ be a binary
	relation defined by:
	$$((\state_{A}, \vc),\state^{\mA}) \ord_{\mA} ((\state'_{A},
	\vc'),\state'^{\mA}) 
	\Leftrightarrow \vc \ord \vc' \land \state_{A} = \state'_{A} \land
	\state^{\mA} = \state'^{\mA} $$
	then $((M \times \mA),\ord_{\mA} )$ is a WSTS with effective $pred$-basis.
	%\todo{this is true, but is it enough to ensure correctness?}
\end{cor}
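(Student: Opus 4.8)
The plan is to follow the template of Lemma~\ref{lem:disjWSTS} and Lemma~\ref{lemma:basis-pred}, observing that adjoining the finite automaton $\mA$ contributes only a finite case distinction on top of the counter system $M$, in the same way that the comparison of vector positions against zero did in Lemma~\ref{lem:disj-dead-WSTS}.

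First I would check that $\ord_{\mA}$ is a wqo: it is clearly reflexive and transitive, and it is well-founded with no infinite antichains because $\ord$ is a wqo and adjoining the equality relations on the finite sets $\stateset_A$ and $\stateset^{\mA}$ only introduces a finite case distinction. So $\ord_{\mA}$ is a wqo on the states of $M\times\mA$.

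Next I would show that $\ord_{\mA}$ is in fact \emph{strongly} compatible with the transition relation of $M\times\mA$. A transition of $M\times\mA$ consists of a transition $\transition{(\state_A,\vc)}{}{(\state'_A,\vcp)}$ of $M$ together with an automaton step $(\state^{\mA},\ell,\state'^{\mA}) \in \delta^{\mA}$, whose label $\ell$ is determined by the $A$-component of the configuration read. Take such a transition and some $r=((\state_A,\vec{d}),\state^{\mA})$ with $((\state_A,\vc),\state^{\mA}) \ord_{\mA} r$; then $\vc \ord \vec{d}$, while the $A$-state and automaton state of $r$ coincide with those of the source. By the strong compatibility of $\lessapprox$ established in Lemma~\ref{lem:disjWSTS}, the $M$-transition can be mimicked from $(\state_A,\vec{d})$, reaching some $(\state'_A,\vec{d}')$ with $\vcp \ord \vec{d}'$; since the relevant $A$-state and the automaton state agree, the same automaton step is available from $r$, so $r$ reaches $((\state'_A,\vec{d}'),\state'^{\mA})$ with $((\state'_A,\vcp),\state'^{\mA}) \ord_{\mA} ((\state'_A,\vec{d}'),\state'^{\mA})$. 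Hence $((M\times\mA),\ord_{\mA})$ is a WSTS, and strong compatibility ensures that ${\uparrow}pred$ of an upward-closed set is again upward-closed.

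Finally, for the effective $pred$-basis I would, given a finite set $R$ of product states, group its elements by automaton state and, for each of the finitely many automaton transitions $(\state^{\mA},\ell,\state'^{\mA}) \in \delta^{\mA}$, consider the set of configurations $\cstate$ with $(\cstate,\state'^{\mA}) \in R$, apply the effective $pred$-basis algorithm of Lemma~\ref{lemma:basis-pred} to it to obtain a finite basis of its $M$-predecessors, retain only those whose $A$-state is consistent with the label $\ell$ (a finite, decidable restriction), and pair each remaining configuration with $\state^{\mA}$; the union over all automaton transitions is then a finite basis of ${\uparrow}pred({\uparrow}R)$ in $M\times\mA$, and every step of this procedure is effective. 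I expect the only subtlety to be bookkeeping around how $\mA$ synchronizes with $M$ — whether the automaton reads the source or the target $A$-state of a step, and threading the resulting label constraint correctly through the predecessor computation — but this is a purely finite-state matter, with all the genuine content (the wqo on the counter part and the explicit predecessor computation) inherited unchanged from Lemmas~\ref{lem:disjWSTS} and~\ref{lemma:basis-pred}.
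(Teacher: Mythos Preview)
Your proposal is correct and matches the paper's intent: the paper states this as a corollary with no proof, treating it as an immediate consequence of Lemmas~\ref{lem:disjWSTS} and~\ref{lemma:basis-pred} plus the observation that the automaton component contributes only a finite case distinction. Your write-up spells out precisely this reasoning --- the wqo property via finite case split, strong compatibility inherited from $\lessapprox$, and the effective $pred$-basis obtained by grouping on automaton states and reusing the $CBasis$ construction --- which is exactly what the paper leaves implicit.
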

Similarly, the algorithm can be used for any safety property
$\varphi(A,B^{(k)})$ over the states of
$A$, and of $k$ $B$-processes.
To this end, we consider the composition $M \times B^k \times \mA$
with $M=(\cstateset,\cstateset_0,\Trans)$, $B
=(\stateset_B,\init_B,\guardset_B,\delta_B)$, and
$\mA=(\stateset^{\mA},\state^{\mA}_0, \stateset_A
\times \stateset_{B^k},\delta^{\mA},\mF)$ is  the automaton that reads states of $A
\times B^k$ as actions and accepts all words that violate the property.\footnote{By symmetry, property $\varphi(A,B^{(k)})$ can be violated by these $k$ explicitly modeled processes iff it can be violated by any combination of $k$ processes in the system.}

\smartparagraph{Example}
Consider again the simple reader-writer system in Figures \ref{fig:Writer}  and \ref{fig:Reader}, 
and assume that  instead of local transition $(nr,\{nw\},r)$ we have an
unguarded 
transition $(nr,\stateset,r)$.  We want to repair the system with respect
to the safety 
property $\varphi = G[(w \land nr_1) \implies (nr_1 W nw)]$ where
$G,W$ are the temporal 
operators \emph{always} and \emph{weak until}, respectively.
Figure~\ref{fig:automaton} depicts the automaton 
equivalent to $\neg \varphi$. To repair the system we first need to split 
the guards as mentioned in 
Section~\ref{sec:prelim}, i.e., 
$(nr,\stateset,r)$ is split into $(nr,\{nr\}, r), (nr,\{r\},r), (nr,
\{nw\},r),$ and 
$(nr,\{w\},r)$. Then we consider the composition $\mC=M \times B
\times \mA$ and we run 
Algorithm \ref{alg:ParamRepair} on the parameters $\mC$, 
$((-,-,(*,*),\state^{\mA}_2))$ (where $(-,-)$ means any writer state and any
reader state, and $*$ means $0$ or $1$), and $TRConstr_{Disj}$. 
\begin{figure}[h]
	%\vspace{-2.2em}
	\centering
	\scalebox{1}{
		 \begin{tikzpicture}[node distance=1.7cm,>=stealth,auto]
  \tikzstyle{state}=[circle,thick,draw=black,minimum size=7mm]
  \tikzstyle{final}=[circle,double,draw=black,minimum size=7mm]

  \begin{scope}
    \node [state] (s1) {$q^{\mA}_0$}
    edge [pre] (-0.8,0);
     %edge [loop above] node[above = 0.5pt] {\tiny{$\exists \mathbf{r}$}} (r);   
    \node [state] (s2) [right of = s1] {$q^{\mA}_1$}
     edge [post, bend left]  node[below = 0.5pt] {\tiny{$nw$}} (s1)
     edge [pre, bend right] node[above = 0.5pt] {\tiny{$w \land nr_1$}} (s1)
     edge [loop above] node[above = 0.5pt] {\tiny{$w \land nr_1$}} (s2);
      \node [final] (s3) [right of = s2] {$q^{\mA}_2$}
     edge [pre] node[above = 0.5pt] {\tiny{$r$}} (s2);   
  \end{scope}

  \end{tikzpicture}
 \label{fig:safery}
	}
	
	\caption{Automaton for $\neg \varphi$}
	\label{fig:automaton}
	%\vspace{-.25cm}
\end{figure}
The model checker in Line~\ref{line:mc} may return the following
error sequences, where we only consider states that didn't occur before: \\
$E_0=\{((-,-,(*,*)),\state^{\mA}_2)\}$,\\
$E_1=\{((w,r_1,(0,0)),\state^{\mA}_1)\}$,\\
$E_2=\{((w,nr_1,(0,0)),\state^{\mA}_0),
((w,nr_1,(0,1)),\state^{\mA}_0),$ $((w,nr_1,(1,0)),\state^{\mA}_0)\}$,\\
$E_3=\{((nw,nr_1,(0,0)),\state^{\mA}_0),
((nw,nr_1,(0,1)),\state^{\mA}_0),\\((w,r_1,(0,0)),\state^{\mA}_0),((w,r_1,(0,1)),\state^{\mA}_0),((w,r_1,(1,0)),\state^{\mA}_0)\}$

In Line~\ref{line:assignment} we find out that the error sequence can be 
avoided if we remove the transitions
$\{(nr,\{nr\},r),(nr,\{r\},r),(nr,\{w\},r)\}$. Another call to 
the model checker in Line \ref{line:mc} finally assures that the new system 
is safe. Note that some states were omitted from error sequences in order to keep the presentation simple.

\subsection{Beyond Disjunctive Systems} \label{sec:beyond-disj-short}
Furthermore, we have extended Algorithm~\ref{alg:ParamRepair} to other systems 
that can be framed as WSTS, in particular pairwise systems~\cite{German92} and 
systems based on broadcasts or other global synchronizations~\cite{EsparzaFM99,JaberJW0S20}. 
We summarize our results here, more details can be found in Appendix~\ref{sec:beyond-disj}. %the extended version~\cite{JSV21}.

Both types of systems are known to be WSTS, and there are two remaining challenges:
\begin{enumerate}[topsep=3pt,noitemsep]

\item how to find suitable constraints to determine a restriction $\trans'$, 
and
\item how to exclude deadlocks.
\end{enumerate}
The first is relatively easy, but the constraints become more complicated 
because we now have synchronous transitions of 
multiple processes. Deadlock detection is decidable for pairwise systems, but the best 
known method is by reduction to reachability in VASS\cite{German92}, 
which has recently been shown to be TOWER-hard~\cite{CzerwinskiLLLM21}. 
For broadcast protocols we can show 
that the situation is even worse:

\begin{restatable}{theorem}{theoremFour}
	\label{thm:BC-dead-undecidable}
Deadlock detection in broadcast protocols is undecidable.
\end{restatable}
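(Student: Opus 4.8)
The plan is to reduce the halting problem for deterministic two-counter (Minsky) machines to parameterized deadlock detection for broadcast protocols. The starting point is the classical simulation of counter machines by broadcast protocols due to Esparza, Finkel and Mayr~\cite{EsparzaFM99}: from a two-counter machine $\mathcal M$ I would build a controller template $A$ that tracks the finite control of $\mathcal M$, and a token template $B$ whose instances represent counter units, so that the value of counter $j$ is the number of $B$-processes in a designated state $\mathsf{c}_j$. Increments and decrements are rendezvous between $A$ and a single $B$-process, moving it from a ``pool'' state into $\mathsf{c}_j$ (resp.\ back); a zero-test for counter $j$ is a broadcast from $A$ that moves every $B$-process still in $\mathsf{c}_j$ into a dedicated ``junk'' state and lets $A$ continue as if the counter were $0$. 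Since only finitely many tokens are available in $\largesys$ for a fixed $n$, I quantify over $n$: a halting computation of $\mathcal M$ reaches bounded counter values, so it is faithfully simulable in $\largesys$ for all sufficiently large $n$.

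Next I would add three gadgets that make ``a deadlock is reachable'' equivalent to ``$\mathcal M$ halts''. (i) The junk state gets a local self-loop with trivial guard, so any configuration with a $B$-process in junk is non-deadlocked; hence every run that \emph{cheats} (performs a zero-test while the tested counter is nonzero) can never reach a deadlock. (ii) Every control state of $A$ at which the simulation could get stuck --- an increment that finds no free token in a too-small instance, or any pending synchronization --- also gets a local self-loop, so such a run loops forever instead of deadlocking; together with (i) this forces a deadlock to occur only on a run that faithfully simulates $\mathcal M$ to completion. (iii) The halting instruction of $\mathcal M$ is mapped to a control state $q_{\mathrm{halt}}$ of $A$ whose unique outgoing template transition is a synchronization with a $B$-state that is never populated on a faithful run (respecting totality, yet enabling nothing), and $B$ is designed so that in $\mathsf{c}_1,\mathsf{c}_2$ and the pool state its only moves are the synchronizations with $A$ used for increment/decrement/zero-test. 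Thus, once a faithful run reaches $q_{\mathrm{halt}}$, the whole configuration is deadlocked.

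Then I would check both directions. If $\mathcal M$ halts, take $n$ large enough to accommodate all counter values along its (finite) halting run; the corresponding faithful run of $\largesys$ reaches $q_{\mathrm{halt}}$ with all tokens in $\mathsf{c}_1,\mathsf{c}_2$ or the pool and none in junk, which by (iii) is a reachable deadlock. Conversely, if some $\largesys$ reaches a deadlock, then by (i) no cheat occurred, by (ii) the simulation never blocked, so the run faithfully simulates $\mathcal M$ up to $q_{\mathrm{halt}}$, i.e.\ $\mathcal M$ halts. Undecidability of the halting problem then gives the theorem. (Alternatively, one can reduce from configuration-reachability for broadcast protocols~\cite{EsparzaFM99} by appending a ``verification phase'' that deadlocks exactly when the target configuration was hit; but this needs the same precautions.)

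The main obstacle is precisely the mismatch between broadcast-protocol runs and faithful counter-machine computations: broadcast protocols cannot test for the \emph{absence} of a token, so the construction must guarantee both that every possible cheat is permanently recorded in a way that forbids a later deadlock, and that no legitimate-looking but stuck configuration --- empty pool, a disabled synchronization, or a control state that totality forces to have an outgoing edge --- is mistaken for the intended halting deadlock. Making $q_{\mathrm{halt}}$ a genuine deadlock while keeping all transition relations total and all guards well-formed for the broadcast model, and verifying that gadgets (i)--(iii) do not interfere, is where the care lies.
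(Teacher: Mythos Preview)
Your proposal is correct and would prove the theorem, but it takes a genuinely different route from the paper. The paper does not reduce from two-counter machines. Instead it reduces from \emph{reachability in affine VASS whose transitions all use broadcast matrices}, a problem shown undecidable by Blondin and Raskin~\cite{BlondinR20}. The advantage of that source problem is that its transitions---add or subtract a unit vector, or multiply the vector by a broadcast matrix---correspond one-to-one to rendezvous and broadcast steps of a broadcast protocol, so the simulation is \emph{exact}: every protocol run faithfully encodes an affine-VASS run, with no possibility of cheating on a zero test. Consequently the paper needs no analogue of your junk state; it simply lifts the German--Sistla gadget~\cite{German92} to the broadcast setting (self-loops on all controller states but one designated state $q'$, a broadcast from the target control state $q_2'$ to $q'$ that sweeps any leftover counter tokens into a witness state $v$, and a broadcast from $v$ that rescues the controller out of $q'$), so that a global deadlock occurs exactly when the target configuration $(q_2',\mathbf{0})$ was reached.

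What your approach buys is self-containedness: you rely only on the halting problem and the classical Esparza--Finkel--Mayr encoding, rather than on a 2020 undecidability result. What the paper's approach buys is a cleaner reduction: because affine VASS with broadcast matrices already match the step-level expressive power of broadcast protocols, there is no unfaithful run to filter out, so your gadget~(i) becomes unnecessary and the final gadget does double duty---it both satisfies totality at $q'$ and detects non-zero leftover counters via the state $v$. Both arguments share the same core idea behind your gadgets~(ii) and~(iii): self-loops on every controller state except one sink, with the sole outgoing transition at that sink being a synchronisation that is enabled precisely when the intended target was \emph{not} reached.
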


The main ingredient of the proof is the following lemma:

\begin{lemma}
There is a polynomial-time reduction from the reachability problem of affine VASS with broadcast matrices to the deadlock detection problem in broadcast protocols.
\end{lemma}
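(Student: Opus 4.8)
The plan is to give, for every affine VASS $\mathcal{V}$ with broadcast matrices and target configuration $\vec{t}$, a broadcast protocol template $B_{\mathcal{V}}$ of size polynomial in $\mathcal{V}$ such that the counter system $M^{BC}$ of $B_{\mathcal{V}}$ has a deadlocked run (a finite maximal run, i.e.\ a reachable configuration with no enabled transition) if and only if $\mathcal{V}$ can reach $\vec{t}$ from its initial configuration. This mirrors the German--Sistla reduction from counter-machine reachability to deadlock detection in rendezvous systems, with two adaptations: since a broadcast protocol has only $B$-processes, the distinguished ``controller'' that simulates the finite control of $\mathcal{V}$ must be \emph{elected at run time}; and since we cannot use rendezvous, each $\mathcal{V}$-transition is realised by broadcasts. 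The latter is where broadcast matrices pay off: a broadcast matrix is a deterministic relabelling of counter indices, which is exactly the form of the receive behaviour of a broadcast message, so the \emph{linear} part of a transition $(\ell, M_a, \vec{v}, \ell')$ is simulated directly by one broadcast whose receive transitions on the designated ``counter states'' $p_1,\dots,p_k$ implement $M_a$ (and stutter on all other states).

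Concretely, $B_{\mathcal{V}}$ has: an initial state $\init$; for each location $\ell$ of $\mathcal{V}$ a controller state $q_\ell$, plus a constant number of auxiliary controller states per transition; a pool state $p_{\mathsf{idle}}$; the counter states $p_1,\dots,p_k$, where the occupancy $\vc(p_i)$ encodes the $i$-th counter; a final sink $q_{\mathsf{done}}$; and a few transient states for the gadget below. From $\init$ a process may fire a broadcast $\mathsf{elect}$, moving to the initial controller state while every other process moves $\init \to p_{\mathsf{idle}}$; the all-$\init$ configuration enables $\mathsf{elect}$, so it is not a deadlock, and after the election no second controller can appear and $\init$ stays empty forever. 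Because the deadlock question ranges over all system sizes $n$, for the ``if'' direction we may pick as many processes as the simulated run ever needs. We may assume w.l.o.g.\ that the additive vectors of $\mathcal{V}$ have entries in $\{-1,0,1\}$, so the additive part $\vec{v}$ amounts to moving at most one token per coordinate between $p_{\mathsf{idle}}$ and the $p_i$; the crucial sub-gadget is a \emph{single-token move}: to move one token out of $p_i$, the controller broadcasts a request $r$ (sending all $p_i$-tokens to a transient state $p_i'$), and then one $p_i'$-token broadcasts an acknowledgement $s$ (going to $p_{\mathsf{idle}}$), on which every other $p_i'$-token returns to $p_i$ and the controller advances; since exactly one process can be the sender of $s$, the net effect is a decrement by one, and a symmetric gadget using $p_{\mathsf{idle}}$ as source gives increments. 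If no token is available, the controller is stuck in a ``waiting'' state; we equip every controller state except $q_{\mathsf{done}}$, every transient state, and every counter state $p_i$ with an internal self-loop, so that such stuck configurations are never deadlocks and the counter encoding is unaffected; $q_{\mathsf{done}}$ has only (never-triggered) stuttering receives, as does $p_{\mathsf{idle}}$.

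To check that $\mathcal{V}$ reaches \emph{exactly} $\vec{t}$ (not merely covers it), once the controller reaches the final location it performs $t_i$ single-token decrements on each $p_i$ and then moves to $q_{\mathsf{done}}$. A global configuration with the controller in $q_{\mathsf{done}}$ is deadlocked iff every other process is in $p_{\mathsf{idle}}$, i.e.\ iff every $p_i$ is empty at that point, i.e.\ iff the simulated run ended in $\vec{t}$ exactly: an over-count leaves a token in some $p_i$ with its internal self-loop enabled, and an under-count makes one of the verification decrements stall in a self-looping waiting state. Soundness and completeness then follow by the two standard simulations: a $\mathcal{V}$-run reaching $\vec{t}$ is mimicked step by step (choosing $n$ large enough) and ends in a deadlock, and conversely any deadlocked run of $M^{BC}$ must, by inspection of the gadgets, have faithfully simulated a $\mathcal{V}$-run reaching $\vec{t}$. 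Since $B_{\mathcal{V}}$ has polynomially many states and transitions, the reduction runs in polynomial time, and Theorem~\ref{thm:BC-dead-undecidable} follows by composing it with the undecidability of reachability in affine VASS with broadcast matrices~\cite{BlondinR20}.

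The main obstacle is making the deadlock characterisation exactly tight in both directions. We must argue that every reachable configuration other than the intended ``$q_{\mathsf{done}}$ plus all idle'' ones has some enabled transition --- this is what forces the self-loop and stuttering-loop decorations on all non-final controller states, on the counter states, and on the transient gadget states --- while simultaneously guaranteeing that these decorations never change a counter value and never enable the final deadlock prematurely, so that a run cannot ``cheat''. Because $n$ is not under our control in the ``only if'' direction, the argument that a deadlocked run cannot have extra processes lingering in unexpected states, nor have exploited a broadcast's side effect on non-participating processes, has to be carried out carefully; once that is in place, the election, the matrix part, the single-token gadget, and the polynomial size bound are all routine.
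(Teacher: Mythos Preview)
Your proposal is correct and follows the same high-level strategy as the paper: adapt the German--Sistla reduction from VASS reachability to deadlock detection, replacing ordinary VASS by affine VASS with broadcast matrices and exploiting that a broadcast matrix is implemented verbatim by the receive side of a broadcast action.

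The technical realisation, however, differs in three respects that are worth recording. First, the paper keeps a fixed controller template $A$ and works in the $A \parallel B^n$ model, whereas you stay in the pure $B^n$ model and \emph{elect} the controller by an initial broadcast; this is exactly what the paper alludes to when it remarks that a special process can be selected at run time, but the paper does not actually carry that out. Second, for the unit-vector part of each transition the paper simply uses pairwise rendezvous between $A$ and a single $B$-process (relying on the standard broadcast-protocol model of Esparza et~al., which admits rendezvous as well as broadcast), while you build a dedicated two-broadcast ``single-token move'' gadget and thus stay within broadcast-only actions. Third, the deadlock test differs: the paper sends the controller to a distinguished state $\state'$ via a broadcast that flushes all counter tokens into a fresh state $v$, from which any surviving token can broadcast and eject the controller from $\state'$; you instead subtract the target vector $\vec{t}$ by repeated single-token decrements and then rely on internal self-loops on the counter states $p_i$ so that the unique deadlocked configuration is ``controller in $q_{\mathsf{done}}$, everyone else idle''. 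Your version is more self-contained in the pure broadcast model and makes the correctness argument for the only-if direction slightly more delicate (you must argue that no token can linger in a transient gadget state once the controller has advanced), while the paper's version is closer to the original German--Sistla construction and offloads the unit updates to rendezvous. Both routes yield the lemma; neither shortcut subsumes the other.
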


\begin{IEEEproof}
We modify the construction from the proofs of Theorems 3.17 and 3.18 from German and Sistla~\cite{German92}, 
using affine VASS instead of VASS and broadcast protocols instead of pairwise 
rendezvous systems.

Starting from an arbitrary affine VASS $G$ that only uses broadcast matrices and where we want to check if configuration $(\state_2,\vc_2)$ is reachable from $(\state_1,\vc_1)$, we first transform it to an affine VASS $G^*$ with the following properties  
\begin{itemize}
\item each transition only changes the vector $\vc$ in one of the following ways: (i) it adds to or subtracts from $\vc$ a unit vector, or (ii) it multiplies $\vc$ with a broadcast matrix $M$ (this allows us to simulate every transition with a single transition in the broadcast system), and
\item some configuration $(\state_2', 0)$\sj{do we have notation for the $0$ vector?} is reachable from some configuration $(\state_1',0)$ in $G^*$ if and only if $(\state_2,\vc_2)$ is reachable from $(\state_1,\vc_1)$ in $G$.
\end{itemize}
The transformation is straightforward by splitting more complex transitions and adding auxiliary states. Now, based on $G^*$ we define process templates $A$ and $B$ such that $\largesys$ can reach a deadlock iff $(\state_2', 0)$ is reachable from $(\state_1',0)$ in $G^*$.

The states of $A$ are the discrete states of $G^*$, plus additional states $\state', \state''$. If the state vector of $G^*$ is $m$-dimensional, then $B$ has states $\state_1, \ldots , \state_m$, plus states $\init, v$. Then, corresponding to every transition in $G^*$ that changes the state from $\state$ to $\state'$ and either adds or subtracts unit vector $\vu_i$, we have a rendezvous sending transition from $\state$ to $\state'$ in $A$, and a corresponding receiving transition in $B$ from $\init$ to $\state_i$ (if $\vu_i$ was added), or from $\state_i$ to $\init$ (if $\vu_i$ was subtracted). For every transition that changes the state from $\state$ to $\state'$ and multiplies $\vc$ with a matrix $M$, $A$ has a broadcast sending transition from $\state$ to $\state'$, and receiving transitions between the states $\state_1, \ldots , \state_m$ that correspond to the effect of $M$.

The additional states $\state', \state''$ of $A$ are used to connect reachability of $(\state_2', 0)$ to a deadlock in $\largesys$ in the following way:
(i) there are self-loops on all states of $A$ except on $\state'$, i.e., the system can only deadlock if $A$ is in $\state'$,
(ii) there is a broadcast sending transition from $\state_2'$ to $\state'$ in $A$, which sends all $B$-processes that are in $\state_1, \ldots , \state_m$ to special state $v$, and
(iii) from $v$ there is a broadcast sending transition to $\init$ in $B$, and a corresponding receiving transition from $\state'$ to $\state''$ in $A$.
Thus, $\largesys$ can only deadlock in a configuration where $A$ is in $\state'$ and there are no $B$-processes in $v$, which is only reachable through a transition from a configuration where $A$ is in $\state_2$ and no $B$-processes are in $\state_1, \ldots , \state_m$. Letting $\state_1$ be the initial state of $A$ and $\init$ the initial state of $B$, such a configuration is reachable in $\largesys$ if and only if $(\state_2', 0)$ is reachable from $(\state_1',0)$ in $G^*$.
\end{IEEEproof}

%\paragraph{Proof idea.}
%Reachability in affine VASS has recently been shown to be undecidable in almost all cases, 
%including the case where all transitions use broadcast matrices~\cite{BlondinR20}. We can reduce 
%this undecidable problem to deadlock detection in broadcast protocols by modifying the 
%construction of German and Sistla~\cite{German92} for reducing reachability in (non-affine) VASS 
%to deadlock detection in pairwise systems. 
%A full proof with the modified construction is given in Appendices~\ref{sec:BC}
%%\qed

\smartpar{Approximate Methods for Deadlock Detection.}
Since solving the problem exactly is impractical or impossible in general, we propose to use 
approximate methods. For pairwise systems, the $01$-counter system introduced as a precise abstraction for disjunctive systems in Sect.~\ref{sec:deadlock} can also be used, but in this case it is not precise, i.e., it may produce spurious deadlocked runs. 
Another possible overapproximation is a system that simulates 
pairwise transitions by a pair of disjunctive transitions. For broadcast 
protocols we can use lossy broadcast systems, for which the problem is 
decidable~\cite{DelzannoSZ10}.\footnote{Note that in the terminology of Delzanno et al., deadlock 
detection is a special case of the \textsc{Target} problem.}
Another alternative is to add initial constraints that restrict the repair algorithm and imply 
deadlock-freedom.

\section{Implementation \& Evaluation}
\label{sec:eval} \todo{can we compare to related work in the literature?}
\todo{one reviewer asked to compare different encodings of error paths (s.t. formula is in CNF)}

We have implemented a prototype of our parameterized repair algorithm that supports the three types of systems 
(disjunctive, pairwise and broadcast), and safety and reachability properties. 
For disjunctive and pairwise systems, we have evaluated it on different variants of reader-writer-protocols, based on the ones given in Sect.~\ref{sec:intro},\ref{sec:model}, where we replicated some of the states and transitions to test the performance of our algorithm on bigger benchmarks. 
For disjunctive systems, all variants have been repaired successfully in less than 2s.
For pairwise systems, these benchmarks are denoted ``RW$i$ (PR)'' in Table \ref{tab:results}. 
A detailed treatment of one benchmark, including an explanation of the whole repair process is given in Appendix~\ref{sec:PW-RW}.% the extended version~\cite{JSV21}.

For broadcast protocols, we have evaluated our algorithm on a range of more complex benchmarks taken from the parameterized verification literature~\cite{JaberWJKS21}: a distributed \textbf{Lock Service}~(DLS) inspired by the Chubby protocol~\cite{Burrows06}, a distributed \textbf{Robot Flocking} protocol~(RF)~\cite{CanepaP07}, a distributed \textbf{Smoke Detector}~(SD)~\cite{JaberJW0S20}, a sensor network implementing a \textbf{Two-Object Tracker}~(2OT)~\cite{ChangT16}, and the cache coherence protocol \textbf{MESI}~\cite{Emerson03} in different variants constructed similar as for RW.
Appendix~\ref{sec:MESI} includes details of this benchmark and its repair process. 

Typical desired safety properties are mutual exclusion and similar properties. 
Since deadlock detection is undecidable for broadcast protocols, the absence of deadlocks needs to be ensured with additional initial constraints.

On all benchmarks, we compare the performance of our algorithm based on the valuations of two flags: SEP and EPT. The SEP (``single error path'') flag indicates that, instead of encoding all the model checker's computed error paths, only one path is picked and encoded for SAT solving. 
When the EPT (``error path transitions'') flag is raised the SAT formula is constructed so that only transitions on the extracted error paths may be suggested for removal. Note that in the default case, even transitions that are unrelated to the error may be removed. 
Table~\ref{tab:results} summarizes the experimental results we obtained. 

We note that the algorithm deletes fewer transitions when the EPT flag is raised~(EPT=T). This is because we tell the SAT solver explicitly not to delete transitions that are not on the error paths. Removing fewer transitions might be desirable in some applications.
We observe the best performance when the SEP flag is set to true~(SEP=T) and the EPT flag is false. This is because the constructed SAT formulas are much simpler and the SAT solver has more freedom in deleting transitions, resulting in a small number of iterations.

%We note that even examples with 10 or 12 states and more than a hundred local transitions can be solved in very reasonable time by (i) and (ii), but (iii) often requires significantly more iterations and time. Moreover, there seems to be no significant overhead in (i) compared to (ii), whereas in only one case variant (i) requires fewer iterations than (ii). We conjecture that these examples are still too small to show significant differences between these two variants.

\begin{table*}
\caption{Running time, number of iterations, and number of deleted transitions~(\#D.T.) for the different configurations. Each benchmark is listed with its number of local states, and edges. We evaluated the algorithms on different sets of errors with $P_1 \cup P_2 = C$ where $P_1$ and $P_2$ are two distinct error sets that differ from one benchmark to another. Smallest number of iterations, runtime per benchmark, deleted transitions are highlighted in boldface.}
\label{tab:results}
\center
%\vspace{-.7em}
	\begin{tabular}{L{1.5cm}C{0.5cm}C{0.5cm}C{0.75cm}C{.5cm}C{0.75cm}C{0.5cm}C{0.5cm}C{0.75cm}C{0.5cm}C{0.5cm}C{.75cm}C{0.5cm}C{0.5cm}C{0.75cm}C{0.5cm}}
	\hline
	\textbf{Benchmark} &
	\multicolumn{2}{c}{\textbf{Size}} & 
	 \textbf{Errors} &
	 \multicolumn{3}{c}{\textbf{[SEP=F \& EPT=F] }} &   \multicolumn{3}{c}{\textbf{[SEP=T \&  EPT=F] }} & \multicolumn{3}{c}{\textbf{[SEP=F \&  EPT=T] }} & \multicolumn{3}{c}{\textbf{[SEP=T \& EPT=T] }}\\
	 & States & Edges & & \#Iter & Time & \#D.T.& \#Iter & Time & \#D.T.& \#Iter & Time & \#D.T. & \#Iter & Time & \#D.T. \\
	\hline
	RW1 (PW) &5&12 & C & 3 & 2.5 & 4 & 3 & 2.9 & 4 & \textbf{2} & \textbf{1.7} & \textbf{2} & \textbf{2} & \textbf{1.7} & \textbf{2}\\
	RW2 (PW) &15&42 & C & 3 & 3.8 & 14 & 3 & 4.8 & 14 & \textbf{2} & \textbf{3.2} & \textbf{7} & 7 & 8.4 & \textbf{7}\\
	RW3 (PW) &35&102 & C & 3 & 820.7 & 34 & 3 & \textbf{7.6} & 34 & \textbf{2} & 552.3 & \textbf{17} & 17 & 40.3 & \textbf{17}\\
	RW4 (PW) &45&132 & C & TO & TO & TO & \textbf{3} & \textbf{11.8} & 44 & TO & TO & TO & 22 & 99.2 & \textbf{22}\\	
	\hline
	DLS &10&95 & P1 & \textbf{1} & \textbf{0.8} & 13 & \textbf{1} & \textbf{0.8} & 13 & 3 & 2.4 & \textbf{5} & 5 & 5.6 & \textbf{5}\\
	DLS &10&95 & P2 & \textbf{1} & \textbf{0.8} & 13 & 2 & 1.7 & 13 & 3 & 2.6 & \textbf{9} & 7 & 5.5 & \textbf{9}\\
	DLS &10&95& C & \textbf{2} & 4.2 & 13 & \textbf{2} & \textbf{1.5} & 13 & 3 & 3 & \textbf{ 9} & 9 & 8.1 & \textbf{9}\\
	RF &10&147 & P1 & \textbf{1} & 2.5 & 32 & \textbf{1} & \textbf{1.2} & 32 & TO & TO & TO & 8 & 12.4 & \textbf{13}\\
	RF&10&147 & P2 &\textbf{ 1} & \textbf{1.2} & 32 & \textbf{1} & 1.3 & 32 & TO & TO & TO & 8 & 11.3 & \textbf{14}\\
	RF&10&147 & C & \textbf{1} & 7.8 & 32 & \textbf{1} & \textbf{1.4} & 32 & TO & TO & TO & 8 & 12.5 & \textbf{12}\\
	SD &6&39 & C & \textbf{1} & \textbf{1} & \textbf{4} & \textbf{1} & \textbf{1} & \textbf{4} & 3 & 2.4 & \textbf{4} & 3 & 3 & \textbf{4}\\	
	2OT &12&128 & P1 & 12 & 18.8 & 26 & \textbf{6} & \textbf{8.3} & 26 & 16 & 73.8 & \textbf{17} & 16 & 34 & \textbf{17}\\
	2OT &12&128 & P2 & \textbf{1} & \textbf{1.8} & 26 & \textbf{1} & \textbf{1.8} & 26 & 4 & 2958 & \textbf{11} & 8 & 16.5 & 12\\
	2OT &12&128 & C & 11 & 17.2 & Unreal. & \textbf{6} & \textbf{11.7} & Unreal. & TO & TO & TO & 11 & 48.6 & Unreal.\\	
	MESI1 &4&26 & C & \textbf{1} & 2.4 & 6 & \textbf{1} & \textbf{0.9} & 6 & 2 & 1.8 & \textbf{5} & 4 & 3.5 & \textbf{5}\\
	MESI2 &9&71 & C & \textbf{1} & \textbf{1.1 }& 26 & \textbf{1} & \textbf{1.1} & 26 & 3 & 56.4 & 20 & 6 & 6.8 & \textbf{15}\\
	MESI3 &14&116 & C & \textbf{1} & 109.4 & 46 & \textbf{1} & \textbf{108.1} & 46 & TO & TO & TO & 6 & 289.9 & \textbf{15}\\
	\hline
\end{tabular}
%\vspace{-1.5em}
%\end{center}
\end{table*}

\section{Related Work}
\label{sec:related}
Many automatic repair approaches have been considered in the 
literature, most of them restricted to monolithic systems~\cite{Jobstm05,Demsky03,Griesm06,Forrest09,Monperrus18,attie2017}.
Additionally, there are several approaches for synchronization synthesis and repair of \emph{concurrent systems}.
Some of them differ from ours in the underlying approach, e.g., being based on automata-theoretic 
synthesis~\cite{FinkbeinerS13,BansalNS20}. Others are based on a similar underlying counterexample-guided 
synthesis/repair principle, but differ in other aspects from ours.
For instance, there are approaches that repair the program by adding atomic sections, which forbid the interruption of a sequence of program statements by other processes~\cite{abstRepair10,bloem2014synthesis}. 
\emph{Assume-Guarantee-Repair}~\cite{assume2020} combines verification and repair, and uses a learning-based algorithm to find counterexamples and restrict transition guards to avoid errors. In contrast to ours, this algorithm is not guaranteed to terminate.
From \emph{lazy synthesis}~\cite{Finkbeiner12} we borrow the idea to construct the set of \emph{all} error paths of a given length instead of a single concrete error path, but this approach only supports systems with a fixed number of components.
Some of these existing approaches are more general than ours in that they support certain infinite-state 
processes~\cite{assume2020,abstRepair10,bloem2014synthesis}, or more expressive specifications 
and other features like partial information~\cite{FinkbeinerS13,BansalNS20}.

The most important difference between our approach and all of the existing repair approaches is that, to 
the best of our knowledge, none of them provide 
correctness guarantees for systems with a parametric number of 
components. This includes also the approach of McClurg et al.~\cite{mcclurg17} for the synthesis of synchronizations in a software-defined network. Although they use a variant of Petri nets as a system model, which would be suitable to express parameterized systems, their restrictions are such that the approach is restricted to a fixed number of components.
In contrast, we include a parameterized model checker in our repair algorithm, and can therefore provide parameterized correctness guarantees.
There exists a wealth of results on parameterized model checking, collected in several good 
surveys recently~\cite{Esparza14,BloemETAL15,clarke2018handbook}.

\section{Conclusion and Future Work}
\label{sec:conclusion}

We have investigated the parameterized repair problem for 
systems of the form 
$\largesys$ with an arbitrary $n \in \Nat$. We introduced a general 
parameterized repair algorithm, based on interleaving the 
generation of candidate repairs with parameterized model checking and 
deadlock detection, and instantiated this approach to 
different classes of systems that can be modeled as WSTS:   
disjunctive systems, pairwise rendezvous systems, and broadcast protocols. 

Since deadlock detection is an important part of our method, we 
investigated this problem in detail for these classes of systems, and 
found that the problem can be decided in EXPTIME for disjunctive systems, and is 
undecidable for broadcast protocols. 
%To the best of our knowledge, the best known solution for deadlock 
%detection in disjunctive systems is based on deadlock detection in pairwise 
%rendezvous systems by a reduction to reachability in VASS~\cite{German92}, which has recently 
%been shown to be \textsc{Tower}-hard~\cite{CzerwinskiLLLM21}.

Besides reachability properties and the absence of deadlocks, our algorithm 
can guarantee general safety properties, based on the automata-theoretic approach to model
checking.
On a prototype implementation of our algorithm, we have shown that it can effectively 
repair non-deterministic overapproximations of many examples from the 
literature. Moreover, we have evaluated the impact of different heuristics or design choices on 
the performance of our algorithm in terms of repair time, number of iterations, and number of 
deleted transitions.

A limitation of the current algorithm is that it cannot guarantee any \emph{liveness properties}, like termination or the absence of undesired loops. 
Also, it cannot automatically \emph{add behavior}~(states, transitions, or synchronization options) to the system, in case the repair for the 
given input is unrealizable. %, which we conjecture to be particularly important to repair liveness properties.
We consider these as important avenues for future work.
%\todo{mention that adding non-determinism and synchronization options could be done 
%systematically, similar to identifying escapes in lazy synthesis}
Moreover, in order to improve the practicality of our approach we want to examine the inclusion 
of symbolic techniques for counter abstraction~\cite{basler2009symbolic}, and 
advanced parameterized model checking techniques, e.g., 
\emph{cutoff} results for disjunctive systems~\cite{EmersonK03,AJK16,JS2018VMCAI}, or recent 
\emph{pruning} results for immediate observation Petri 
nets, which model exactly the class of disjunctive systems~\cite{EsparzaRW19}. 
%\ms{maybe mention different approaches and maybe say attractor approach. 
%Also you can mention that we might explore approaches that doesn't depend solely on trimming 
%transitions, but maybe also adding transitions. We can also mention a more sophisticated 
%suggestion than the one in section 5.3. Also symbolic computation}
%\input{symbDisjMC}
%\newpage
\bibliographystyle{splncs04}
\bibliography{paper,local,references,crossrefs}
%\appendix 
%\newpage
%\input{appendix}

\begin{appendices} 
\newpage
\newpage

\section{Full Proofs of Lemmas from Section~\ref{sec:MC-disj}}
\label{app:proofs}

\lemmaOne*

\begin{IEEEproof} 
	The partial order $\lessapprox$  is a wqo due to the fact that
	$\ord$ is
	a wqo. Moreover, we show that $\lessapprox$ is strongly compatible with $\Trans$.
	Let
	$\cstate = (\state_{A},\vc),\cstate'= (\state'_{A},\vc'),r= (\state_{A},\vec{d}) \in \cstateset$ such that
	$\transition{\cstate}{t_U}{\cstate'} \in \Trans$
	and $\cstate \lessapprox r$. Since the transition $t_U$ is enabled in $\cstate$,
	it is also enabled in $r$ and $\exists r' = (\state'_{A},\vec{d}') \in \cstateset$
	with
	$\transition{r}{t_U}{r'} \in \Trans$.  Then it is easy to see that
	$\cstate' \lessapprox	r'$: either $t_U$ is a transition of $A$, then we have $\vc = \vc'$ and $\vec{d} = \vec{d}'$, or $t_U$ is a transition of $B$ with $t_U = (\state_i,g,\state_j)$, then $\state_A = \state'_A$ and $\vc'=\vc - \vc_i + \vc_j \ord \vec{d} - \vc_i + \vc_j = \vec{d'}$. 
\end{IEEEproof}

\lemmaTwo*

\begin{IEEEproof}
		Let $R \subseteq \cstateset$ be finite. Since $pred({\uparrow} R)$ will be upward-closed with respect to $\lessapprox$,
		it is sufficient to prove that a $basis$ of $pred({\uparrow} R)$ can be computed from $R$. Let $g = \{\state_t\}$, $f=( (t =j \land \vcp(j)=1) \lor (\vcp(t) \geq 1 \land \vcp(j) = 0) )$ \sj{can we give an intuition what $f$ is needed for?}.
	Consider the following set of states:
	$$
	\cbasiss
$$
	\noindent

	Clearly, $CBasis \subseteq pred({\uparrow}R)$, and $CBasis$ is finite. We claim that also $CBasis \supseteq 
	minBasis(pred({\uparrow}R))$. For the purpose of reaching a contradiction, assume $CBasis \not \supseteq 
	minBasis(pred({\uparrow}R))$, which implies that there exists a $(\state_A,\vc) \in
	(minBasis(pred({\uparrow}R)) \cap \neg CBasis)$. Since $(\state_A,\vc) \not \in CBasis$, there exists 
	$(\state'_A,\vcp) \not \in R$ with $\transition{(\state_A,\vc)}{}{(\state'_A,\vcp)}$ and since $(\state_A,\vc) \in 
	minBasis(pred({\uparrow}R))$, there is a $(\state'_A,\vec{d}') \in R$ with $(\state'_A,\vec{d}') \lessapprox 
	(\state'_A,\vcp)$. We differentiate between two cases:
	
	\begin{itemize}
		\item Case 1: Suppose $\transition{(\state_A,\vc)}{t_A}{(\state'_A,\vcp)}$ with $t_A = (\state_A,g,\state'_A) \in \delta_A$ and $(\state_A,\vc) \modelsg{\state_A} g$. Then $\vc = \vcp$, and by definition of CBasis there exists $(\state_A,\vec{d}) \in CBasis$ with
		$[\transition{(\state_A,\vec{d})}{}{(\state'_A,\vec{d}')} \land \vec{d}=\vec{d}' \land \vec{d}'(t) \geq 1]$ or
		$[\transition{(\state_A,\vec{d})}{}{(\state'_A,\vec{d}'+ u_t)} \land
		\vec{d}=\vec{d}' + u_t \land \vec{d}'(t) = 0]$.
		Furthermore, we have $\vec{d}' \ord \vcp$, which implies
		$(\state_A,\vec{d}) \lessapprox (\state_A,\vc)$ with $(\state'_A,\vec{d}')
		\in R$. Contradiction.
		
		\item Case 2: Suppose $\transition{(\state_A,\vc)}{t_B}{(\state'_A,\vcp)}$ with
		$t_B = (\state_i,g,\state_j) \in \delta_B$ and
		$(\state_A,\vc) \modelsg{\state_i} g$. Then 
		$ \state_A =\state'_A \land \vc = \vcp+ \vu_i -\vu_j$.
		By definition of CBasis there exists $(\state_A,\vec{d}) \in CBasis$ such that one of the following holds:
		\begin{itemize}
			\item $\transition{(\state_A,\vec{d})}{}{(\state'_A,\vec{d}')}\: \land \: \vec{d}'=\vec{d} -  \vu_i + \vu_j $
			\item $\vec{d}'(t) = 0 \land \vec{d}'(j) \geq 1 \land \transition{(\state_A,\vec{d})}{}{(\state'_A,\vec{d}' + \vu_t)}\: \land \: \vec{d}' + \vu_t =\vec{d} -  \vu_i + \vu_j $
			\item $f \land \transition{(\state_A,\vec{d})}{}{(\state'_A,\vec{d}' + \vu_j)}\: \land \: \vec{d}' + \vu_j =\vec{d} -  \vu_i+ \vu_j $
			\item $\vec{d}'(t) = 0 \land \vec{d}'(j) = 0 \land \transition{(\state_A,\vec{d})}{}{(\state'_A,\vec{d}' + \vu_t+ \vu_j)}\: \land \: \vec{d}' + \vu_t + \vu_j = \vec{d} - \vu_i + \vu_j $
		\end{itemize}
		Furthermore, we have $\vec{d}' \ord \vcp$, which implies that 
		$(\state_A,\vec{d}) \lessapprox (\state_A,\vc)$ with $(\state_A,\vec{d})
		\in minBasis(pred({\uparrow}R))$. Contradiction.
%\qed		
	\end{itemize}	
	
	\end{IEEEproof}

\begin{lemma}\label{app:lemm-strgcomp}
	The wqo $\lessapprox$ is strongly compatible with $\Trans$. Hence, if $R \subseteq \cstateset$ is upward-closed with respect to $\lessapprox$ then
	$pred(R)$ is also upward-closed.
\end{lemma}
\begin{IEEEproof}
	Suppose $pred(R)$ is not upward-closed, then $\exists \cstate_1,\cstate_2$ with $\cstate_1 \in pred(R), \cstate_2 \not\in pred(R)$ and $\cstate_1 \lessapprox \cstate_2$. We have $\cstate_1 \in pred(R)$ then there exists a local transition $t_U \in \trans_U$, $\cstate'_1 \in R$ with $\transition{\cstate_1}{t_U}{\cstate'_1} \in \Trans$. However, by definition of the strongly compatible wqo $\lessapprox$, we have $t_U$ is enabled in $\cstate_2$, $\transition{\cstate_2}{t_U}{\cstate'_2} \in \Trans$, and $\cstate'_1 \lessapprox \cstate'_2$. Hence $\cstate'_2 \in R$~($R$ is upward-closed) and $\cstate_2 \in pred(R)$. Contradiction.
\end{IEEEproof}

\section{Local Witnesses are Upward-closed.}
\label{app:witnesses}
We show another property of our algorithm: even though for the reachable error sequence $\mR\mE$ we do not consider 
the upward closure, the error paths we discover are in a sense upward-closed. This implies that an $\mR\mE$ of 
length $k$ represents \emph{all possible error paths} of length $k$. We formalize this in the following.

Given a reachable error sequence $\mR\mE=RE_k,\ldots,RE_0$, we denote by
$\mU\mE$ the sequence ${\uparrow}RE_k,\ldots,{\uparrow}RE_0$. Furthermore,
let
a \emph{local witness} of $\mR\mE$ be a
sequence $\mT_{\mR\mE} = t_{U_k} \ldots t_{U_1}$ where for all $i \in
\{1,\ldots,k\}$ there exists $\cstate \in RE_i, \cstate' \in RE_{i-1}$ with
$\transition{\cstate}{t_{U_i}}{\cstate'}$. We define similarly the local witness
$\mT_{\mU\mE}$ of $\mU\mE$.

\begin{restatable}{lemma}{lemmaSix}
\label{lemma:faithful}
	Let $\mR\mE$ be a reachable error sequence. Then every local witness
	$\mT_{\mU\mE}$ of $\mU\mE$ is also a local witness of $\mR\mE$.
\end{restatable}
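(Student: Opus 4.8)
The plan is to reduce the statement about $\mU\mE$ to the definition of the reachable error sequence $\mR\mE$ by an induction on the index $i$, exploiting that each $RE_i$ is obtained by a forward step $Succ(RE_{i+1}) \cap {\uparrow}E_i$ intersected with an upward-closed set, together with the strong compatibility of $\lessapprox$ with $\Trans$ established in Lemma~\ref{lem:disjWSTS}. Concretely, let $\mT_{\mU\mE} = t_{U_k}\ldots t_{U_1}$ be a local witness of $\mU\mE$, so for each $i$ there are $r_i \in {\uparrow}RE_i$ and $r_{i-1}' \in {\uparrow}RE_{i-1}$ with $\transition{r_i}{t_{U_i}}{r_{i-1}'}$. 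The goal is to produce, for each $i$, actual states $\cstate_i \in RE_i$ (not just in the upward closure) witnessing the \emph{same} transition labels.

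First I would observe the base case: $RE_k = E_k \subseteq \cstateset_0$, and I claim ${\uparrow}RE_k \cap \cstateset_0 = RE_k$, because an initial state $(\init_A,\vc)$ with $\vc(\state)=0$ for $\state \neq \init_B$ is a minimal element of its own upward closure — nothing strictly below it in $\lessapprox$ is initial, and $RE_k$ already consists of such states. Hence any $r_k \in {\uparrow}RE_k$ that has an outgoing transition labeled $t_{U_k}$ into ${\uparrow}RE_{k-1}$ can be replaced: by strong compatibility, the state $\cstate_k \in RE_k$ with $\cstate_k \lessapprox r_k$ also enables $t_{U_k}$ and moves to some $\cstate' \lessapprox r_{k-1}'$, and since $t_{U_k}$ is a fixed local transition, $\cstate' \in {\uparrow}RE_{k-1}$ by monotonicity. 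Then I would push this downward: inductively, from $\cstate_{i} \in RE_i$ with $\transition{\cstate_i}{t_{U_i}}{\cstate'}$ and $\cstate' \in {\uparrow}RE_{i-1}$, I need a genuine $\cstate_{i-1} \in RE_{i-1}$ with $\transition{\cstate_i'}{t_{U_{i-1}}}{\cdots}$ into ${\uparrow}RE_{i-2}$ — here the key point is that $RE_{i-1} = Succ(RE_i) \cap {\uparrow}E_{i-1}$, so $\cstate'$, being a successor of $\cstate_i \in RE_i$ and lying in ${\uparrow}RE_{i-1} \subseteq {\uparrow}E_{i-1}$, is in fact already in $RE_{i-1}$ itself provided $\cstate' \in {\uparrow}E_{i-1}$ and it is a successor of a state in $RE_i$; one must check that $\lessapprox$-monotonicity keeps $\cstate'$ inside ${\uparrow}E_{i-1}$, which holds because ${\uparrow}E_{i-1}$ is upward-closed by construction.

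The main obstacle I anticipate is the bookkeeping around the asymmetry between $\lessapprox$ being strongly compatible (so single transitions lift to single transitions) and the definition of $RE_i$ via forward reachability: I must be careful that the replacement state $\cstate'$ I obtain is not merely $\lessapprox$-above some element of $RE_{i-1}$ but is genuinely a member of $RE_{i-1}$, which requires re-deriving membership from the defining equation $RE_{i-1} = Succ(RE_i)\cap{\uparrow}E_{i-1}$ rather than from upward closure. This is exactly where the argument could go wrong if $Succ(RE_i)$ were not itself upward-closed — but since we already have $\cstate_i \in RE_i$ as a true predecessor and the transition label is fixed, $\cstate' \in Succ(RE_i)$ directly, and $\cstate' \in {\uparrow}E_{i-1}$ since ${\uparrow}RE_{i-1} \subseteq {\uparrow}E_{i-1}$. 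Chaining these replacements from $i=k$ down to $i=1$ yields states $\cstate_k \in RE_k, \ldots, \cstate_0 \in RE_0$ with $\transition{\cstate_i}{t_{U_i}}{\cstate_{i-1}}$ for all $i$, which is precisely the definition of $\mT_{\mU\mE}$ being a local witness of $\mR\mE$.
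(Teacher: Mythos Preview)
Your argument has a fundamental gap: you invoke strong compatibility of $\lessapprox$ in the wrong direction. Lemma~\ref{lem:disjWSTS} says that a transition at a \emph{smaller} state lifts to one at a \emph{larger} state; you need the converse, concluding from $\transition{r_k}{t_{U_k}}{r_{k-1}'}$ and $\cstate_k \lessapprox r_k$ that $t_{U_k}$ is enabled at $\cstate_k$ and leads to some $\cstate' \lessapprox r_{k-1}'$. In disjunctive systems this is simply false---a guard $\{\state_t\}$ satisfied at $(\state_A,\vec{d})$ because $\vec{d}(t) \geq 1$ need not be satisfied at a smaller $(\state_A,\vc)$ with $\vc(t)=0$. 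The same issue recurs in your inductive step: even after landing at some $\cstate' \in RE_{i-1}$, nothing guarantees $t_{U_{i-1}}$ is enabled there, since the witness $r_{i-1} \in {\uparrow}RE_{i-1}$ supplied by $\mU\mE$ may be strictly larger and unrelated to $\cstate'$. (Your auxiliary claim ${\uparrow}RE_k \cap \cstateset_0 = RE_k$ is also incorrect: initial states $(\init_A,(n,0,\ldots,0))$ form an infinite $\lessapprox$-chain, whereas $E_k$ is contained in a finite minimal basis, so the intersection properly contains $RE_k$.)

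The paper's proof circumvents this with a two-pass argument. First a \emph{backward} pass through the sets $E_i$ (not $RE_i$): using the explicit predecessor-basis construction of Lemma~\ref{lemma:basis-pred}, it extracts for each $i$ a basis element $(\state_A^i,\vc^i) \in E_i$ at which the specific local transition $t_{U_i}$ is enabled and which maps into ${\uparrow}E_{i-1}$; iterating down to $i=k$ yields a genuine starting point $\cstate_k^R = (\state_A^k,\vc^k) \in E_k = RE_k$. Then a \emph{forward} pass fires $t_{U_k},\ldots,t_{U_1}$ from $\cstate_k^R$: each resulting $\cstate_i^R$ lies in $RE_i = Succ(RE_{i+1}) \cap {\uparrow}E_i$ because it is a successor of $\cstate_{i+1}^R \in RE_{i+1}$ and dominates $(\state_A^i,\vc^i) \in E_i$, and $t_{U_i}$ is enabled at $\cstate_i^R$ precisely because $(\state_A^i,\vc^i) \lessapprox \cstate_i^R$ and enabling is \emph{upward}-closed---the correct direction of compatibility.
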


\begin{IEEEproof}
	Let $\mT_{\mU\mE}= t_{U_k} \ldots t_{U_1}$. Then there exist $\cstate_k \in
	{\uparrow}E_k = {\uparrow}RE_k $, $\cstate_{k-1} \in {\uparrow}RE_{k-1}$,\dots,
	$\cstate_0 \in {\uparrow}RE_0$ such that	
	$\transition{\cstate_k}{t_{U_k}}{\cstate_{k-1}}\transition{}{t_{U_{k-1}}}{\ldots}
	\transition{\ldots}{t_{U_{2}}}{}\transition{\cstate_1}{t_{U_{1}}}{\cstate_0}$.
	Let
	$\cstate_0=(\state_A^0,\vec{d}^0)$, and let $t_{U_1}=(\state_{U_{i_1}}, \{ \state_{t_1}\},\state_{U_{j_1}})$. Then, by construction of $\mE$, 
	there exists
	$(\state_A^0,\vc^0) \in E_0,(\state_A^1,\vc^1) \in E_1$ with
	$(\state_A^0,\vc^0) \lessapprox (\state_A^0,\vec{d}^0)$ and 
	$\transition{(\state_A^1,\vc^1)}{t_{U_1}}{(\state_A^0,\vc^0)}$
	or
	$\transition{(\state_A^1,\vc^1)}{t_{U_1}}{(\state_A^0,\vc^0 + \vu_{t_1})}$,
	%or
	%$\transition{(\state_A^{m_1},\vc_{m_1})}{t_{U_1}}{(\state_A^{m_0},\vc_{m_0} +
	%		\vu_{j_1})}$ or
	%	$\transition{(\state_A^{m_1},\vc_{m_1})}{t_{U_1}}{(\state_A^{m_0},\vc_{m_0} +
	%		\vu_{t_1} + \vu_{j_1})}$
	hence $t_{U_1}$ is enabled in
	$(\state_A^1,\vc^1)$.
	Using the same argument we can compute $(\state_A^2,\vc^2) \in E_2,
	(\state_A^3,\vc^3) \in E_3$,\ldots until we reach the
	state $(\state_A^k,\vc^k) \in E_k$ where $t_{U_k}$ is enabled.
	Therefore we have the sequence	
	$\transition{\cstate^R_k}{t_{U_k}}{\cstate^R_{k-1}}\transition{}{t_{U_{k-1}}}{\ldots}
	\transition{\ldots}{t_{U_{2}}}{}\transition{\cstate^R_1}{t_{U_{1}}}{\cstate^R_0}$
	with $\cstate^R_k = (\state_A^k,\vc^k) \in RE_k = E_k$ and for all $i <
	k$ we have
	$\cstate^R_i \in RE_i$, as they are reachable from $\cstate^R_k  \in
	RE_k$ and $(\state_A^i,\vc^i) \lessapprox
	\cstate^R_i$ which guarantees that $t_{U_i}$ is enabled in $\cstate^R_i$.
\end{IEEEproof}

\section{Beyond Disjunctive Systems}
\label{sec:beyond-disj}

Algorithm \ref{alg:ParamRepair} is not restricted to disjunctive systems. 
In principle, it can be used for any system that can be modeled as a WSTS with effective
$pred$-basis, as long as we can construct the transition relation
constraint ($TRConstr$) for the corresponding system.
In this section we show two other classes of systems that can be modeled in this framework: pairwise rendezvous (PR) and broadcast (BC) systems. We introduce transition relation constraints for these systems, as well as a procedure \textsc{BuildSyncConstr} that must be used instead of \textsc{BuildConstr} when a transition relation comprises synchronous actions.

Since these two classes of systems require processes to synchronize on certain actions, we first introduce a different notion of process templates.

\smartpar{Processes.} A \emph{synchronizing process template} 
is a transition system\\
$U=(\stateset_U, \init_U, \Sigma , \trans_U)$ with

	\begin{itemize}[noitemsep,topsep=1pt]
	\item $\stateset_U \subseteq \stateset$ is a finite set of states including the
	initial state $\init_U$,
	\item $\Sigma = \Sigma_{sync}  \times \{?,!,??,!!\} \: \cup \{\tau\}$  where $\Sigma_{sync}$ is a set of synchronizing actions, and $\tau$ is an internal action,
	\item $\trans_U: \stateset_U \times \Sigma \times
	\stateset_U$ is a transition relation.
\end{itemize}

Synchronizing actions like $(a,!)$ or $(b,?)$ are shortened to $a!$ and $b?$. Intuitively actions of the form $a!$ and $a?$ are PR send and receive actions, respectively, and $a!!,a??$ are BC send and receive actions, respectively.

All processes mentioned in the following are based on a synchronizing process template. We will define global systems based on either PR or BC synchronization in the following subsections.

\subsection{Pairwise Rendezvous Systems}
\label{sec:PR}

A PR system \cite{German92} consists of a finite number of processes running
concurrently.
As before, we consider systems of the form $\largesys$. The semantics is interleaving, except for actions where two processes synchronize.
That is, at every time step, either exactly one process makes an internal
transition $\tau$, or exactly two processes synchronize on a single action $a
\in \Sigma_{sync}$.
For a synchronizing action $a \in \Sigma_{sync}$, the initiator
process locally executes the $a!$ action and the recipient process executes the $a?$
action.

Similar to what we defined for disjunctive systems, the configuration space of all systems $\largesys$, for fixed $A,B$
but arbitrary $n \in \Nat$, is the counter system
$M^{PR}=(\cstateset,\cstateset_0,\Trans)$, where:
\begin{itemize}[noitemsep,topsep=1pt]
	\item $\cstateset \subseteq \stateset_A \times \Nat_0^{|B|}$ is the set of states,
	\item $\cstateset_0=\{(init_A,\vc)  \mid \forall \state_B \in \stateset_B: \vc(\state_B)
	= 0 \text{ if }  \state_B \neq init_B ) \}$ is the set of initial states, 
	%	\item $\Sigma = \Sigma_{sync}  \times \{?,!\} \: \cup {\tau}$
	\item $\Trans$ is the set of transitions $((\state_A,\vc), (\state'_A,\vc'))$ such that one of the following holds:
	\begin{enumerate}
		\item $(\state_A,\tau,\state'_A) \in \delta_A \land \vc = \vc'$ (internal transition $A$)
		\item $\exists \state_{i}, \state_{j} : (\state_{i},\tau,\state_{j}) \in
		\delta_B \land c(i) \geq 1 \land \vc' = \vc - \vec{u}_i +   \vec{u}_j \land \state_A = \state'_A$ (internal transition $B$)
		\item $a \in \Sigma_{sync} \land (\state_A,a!,\state'_A) \in \delta_A \land \exists
		\state_{i}, \state_{j} : (\state_{i},a?,\state_{j}) \in \delta_B \land
		c(i) \geq 1, \vc' = \vc - \vec{u}_i +   \vec{u}_j$ (synchronizing transition $A,B$)
		\item $a \in \Sigma_{sync} \land (\state_A,a?,\state'_A) \in \delta_A \land \exists
		\state_{i}, \state_{j} : (\state_{i},a!,\state_{j}) \in \delta_B \land
		c(i) \geq 1, \vc' = \vc - \vec{u}_i +   \vec{u}_j$ (synchronizing transition $B,A$)
		\item $ \exists
		\state_{i}, \state_{j} : (\state_{i},a!,\state_{j}) \in \delta_B \land \exists
		\state_{l}, \state_{m} : (\state_{l},a?,\state_{m}) \in \delta_B \land
		c(i) \geq 1 \land c(l) \geq 1 \land \vc' = \vc - \vec{u}_i +   \vec{u}_j - \vec{u}_l +   \vec{u}_m$ (synchronizing transition $B,B$)
	\end{enumerate}
\end{itemize}

The following result can be considered folklore, a proof can be found in the survey by Bloem et al.~\cite{BloemETAL15}.

\begin{restatable}{lemma}{lemmaPR}
	Let $M^{PR}=(\cstateset,\cstateset_0,\Trans)$ be a counter system for process templates $A,B$ with PR synchronization.
	Then $(M^{PR},\lessapprox)$ is a WSTS with effective $pred$-basis.
\end{restatable}

\smartpar{Initial Constraints.}
The constraint $TRConstr_{PR}$, ensuring that not all local transitions from any given local state are removed, is 
constructed in a similar way as $TRConstr_{Disj}$.

Furthermore, the user may want to ensure that in the returned repair, either (a) for all $a \in \Sigma_{sync}$, $t_{a!}$ is deleted if and only if all $t_{a?}$ are deleted, or (b) that synchronized actions are deterministic, i.e., for every state $q_U$ and every synchronized action $a$, there is exactly one transition on $a?$ from $q_U$. 
We give \emph{user constraints} that ensure such behavior.

Denote by $t_{a?},t_{a!}$ synchronous local transitions based on an action $a$. Then, the constraint ensuring property (a) is 
$$
\bigwedge_{a \in \Sigma_{sync}}[ (t_{a!} \land (\bigvee_{t_{a?} \in
	\trans} t_{a?})) \lor (\neg t_{a!} \land (\bigwedge_{t_{a?} \in
	\trans} \neg t_{a?}))]
$$
To encode property (b), for $U \in \{A,B\}$ and $a \in \Sigma_{sync}$, let $\{t_{\state_U}^{a_{?}^1},\ldots,t_{\state_U}^{a^m_{?
}}\}$ be the set of all $a?$ transitions from state $\state_U \in \stateset_U$. Additionally, let $one(t_{\state_U}^{a
	_{?}})= \bigvee_{j \in \{1,\ldots,m\}}[ t_{\state_U}^{a_{?}^j} \bigwedge_{l \neq j} \neg t_{\state_U}^{a_{?}^l}]$. 
Then, (b) is ensured by

$$ \bigwedge_{a \in \Sigma_{sync}} \bigwedge_{\state_{U} \in \stateset} one(t_{\state_U}^{a_{?}})$$
%$TRConstr_{PR}$ is defined as follows:
%$$
%TRConstr_{PR} = \bigwedge_{\state_{C} \in \stateset_A} \bigvee_{t_A \in
	%	\delta_A(\state_{C})} t_A \lor \bigwedge_{\state_{B} \in \stateset_B} \bigvee_{t_B
	%	\in \delta_B(\state_{B})} t_B
%$$
%In addition to ensuring the totality of the transition relation, 
%\todo{why is this in an equation environment, but the constraints for disj systems are not?}

\smartpar{Deadlock Detection for PR Systems.} \sj{revise this}
German and Sistla~\cite{German92} have shown that deadlock detection in PR systems can be reduced 
to reachability in VASS, and vice versa. 
Thus, at least a rudimentary version of repair 
including deadlock detection is possible, where the deadlock detection only excludes the current candidate repair, 
but may not be able to provide constraints on candidates that may be considered in the future.
Moreover, the reachability problem in VASS has recently been shown to be \textsc{Tower}-hard, so a 
practical solution is unlikely to be based on an exact approach.

\subsection{Broadcast Systems}
\label{sec:BC}

In broadcast systems, the semantics is interleaving, except for actions where all processes synchronize, with one process ``broadcasting'' a message to all other processes. Via such a broadcast synchronization, a special process can be selected while the system is running, so we can restrict our model to systems that only contain an arbitrary number of
user processes with identical template $B$. 
%As before, we
%distinguish different $B$-processes by a subscript, and we consider time as
%being discrete.
Formally, at every time step either exactly one process makes an internal
transition $\tau$, or all processes synchronize on a single action $a \in
\Sigma_{sync}$.
For a synchronized action $a \in \Sigma_{sync}$, we say that the initiator
process executes the $a!!$ action and all recipient processes execute the $a??$
action.
For every action $a \in \Sigma_{sync}$ and every state $\state_B \in \stateset_B$, there
exists a state $\state'_B \in \stateset_B$ such that $(\state_B, a??, \state'_B) \in
\delta_B$.
Like Esparza et al.~\cite{EsparzaFM99}, we assume w.l.o.g. that the transitions of recipients are deterministic for any given action, which implies that the effect of a broadcast message on the recipients can be modeled by multiplication of a \emph{broadcast matrix}. We denote by $M_a$ the broadcast matrix for action $a$.

Then, the configuration space of all broadcast systems $B^n$, for fixed $B$ but arbitrary $n
\in \Nat$, is the counter system $M^{BC}=(\cstateset,\cstateset_0,\Trans)$ where:
\begin{itemize}[noitemsep,topsep=1pt]
	\item $\cstateset \subseteq \Nat_0^{|B|}$ is the set of states,
	\item $\cstateset_0=\{\vc \mid \forall \state_B \in \stateset_B: \vc(\state_B) = 0
	\text{ iff }  \state_B \neq init_B ) \}$ is the set of initial states, 
	%\item $\Sigma = \Sigma_{sync} \cup {\tau}$
	\item $\Trans$ is the set of transitions $(\vc,\vc')$ such that one of the following holds: 
	\begin{enumerate}
		\item  $\exists \state_{i},\state_{j} \in \stateset_B: \: (\state_{i},\tau,\state_{j}) \in \delta_B \land \vc' = \vc - \vec{u}_i + \vec{u}_j$ (internal transition)
		\item $\exists a \in \Sigma_{sync}: \vc' = M_a \cdot (\vc - \vec{u}_i) + \vec{u}_j$ (broadcast)
	\end{enumerate}
\end{itemize}

\begin{lemma}\cite{EsparzaFM99}
Let $M^{BC}=(\cstateset,\cstateset_0,\Trans)$ be a counter system for process template $B$ with BC synchronization.
	Then $(M^{BC},\ord)$ is a WSTS with effective $pred$-basis.
\end{lemma}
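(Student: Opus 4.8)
The plan is to verify the two defining properties --- that $\ord$ is a wqo compatible with $\Trans$, and that a $pred$-basis is effectively computable --- following the same template as Lemma~\ref{lemma:basis-pred}, but now handling broadcast transitions alongside internal ones.

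That $\ord$, the component-wise ordering on $\Nat_0^{|B|}$, is a wqo is Dickson's lemma, exactly as in Lemma~\ref{lem:disjWSTS}. The substantive point is compatibility with $\Trans$. For an internal transition $(\state_i,\tau,\state_j) \in \delta_B$ the argument is verbatim the disjunctive case: if $\vc \ord \vec{d}$ and $\vc(i) \ge 1$, then $\vec{d}(i) \ge 1$, so the transition is enabled in $\vec{d}$, and $\vc - \vu_i + \vu_j \ord \vec{d} - \vu_i + \vu_j$. For a broadcast with sender transition $(\state_i,a!!,\state_j)$, enabledness again only requires a token in $\state_i$, hence is preserved under $\ord$; and since the broadcast matrix $M_a$ has non-negative entries, $x \ord y$ implies $M_a x \ord M_a y$, so from $\vc - \vu_i \ord \vec{d} - \vu_i$ we obtain $M_a(\vc - \vu_i) + \vu_j \ord M_a(\vec{d} - \vu_i) + \vu_j$. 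Thus $\ord$ is in fact strongly compatible with $\Trans$, and $(M^{BC},\ord)$ is a WSTS.

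For effective $pred$-basis, given a finite basis of an upward-closed set $R$, I would compute for each basis element $\vec{d}$ and each local $B$-transition a finite set of minimal predecessors and return $minBasis$ of the union. Internal transitions are handled as in the proof of Lemma~\ref{lemma:basis-pred}, restricted to $\delta_B$. For a broadcast with sender transition $(\state_i,a!!,\state_j)$, I need the minimal $\vc$ with $\vc(i) \ge 1$ such that $M_a(\vc - \vu_i) + \vu_j \in {\uparrow}\{\vec{d}\}$. Because recipients act deterministically, $M_a$ is a $0/1$ matrix with exactly one $1$ per column, so writing $\vec{e} = \vc - \vu_i$ the constraint $M_a \vec{e} \ord \vec{d} - \vu_j$ unfolds to inequalities $\sum_{\ell:\, M_a(k,\ell)=1} \vec{e}(\ell) \ge (\vec{d} - \vu_j)(k)$ whose feasible region is upward-closed with a finite, computable minimal basis; restoring the sender's token at coordinate $i$ then yields a finite set of minimal broadcast-predecessors, and this set is again upward-closed. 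Taking the union over all basis elements of $R$ and all $B$-transitions gives a finite basis of ${\uparrow}pred({\uparrow}R)$.

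The main obstacle is exactly this last step: inverting the action of the broadcast matrix. Unlike the disjunctive case, where one transition relocates a single process, a broadcast simultaneously moves \emph{all} recipients, so a predecessor configuration is constrained only up to the preimage structure of $M_a$; one must argue that nonetheless only finitely many minimal preimages are relevant, and that the bookkeeping for the sender's token (removed from $\state_i$, re-inserted at $\state_j$) does not break upward-closedness. Since this is precisely the construction underlying the coverability algorithm for broadcast protocols, I would either invoke~\cite{EsparzaFM99} directly or spell out the short argument sketched above.
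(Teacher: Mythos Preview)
The paper does not prove this lemma at all; it simply cites~\cite{EsparzaFM99} and moves on. Your proposal is therefore not a reconstruction of the paper's proof but rather a sketch of the original argument from Esparza, Finkel, and Mayr --- and as such it is correct: Dickson's lemma for the wqo, monotonicity of $M_a$ (non-negative entries) for strong compatibility of broadcasts, and the finite enumeration of minimal preimages under the column-stochastic $0/1$ matrix $M_a$ for the effective $pred$-basis are exactly the ingredients used there.
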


\smartpar{Initial Constraints.} 
$TRConstr_{BC}$ is defined similarly to $TRConstr_{PR}$, except that we do not have process $A$ and can omit 
transitions of $A$.\sj{To be precise, it should also preserve the assumption that in every state we have a transition on a??, for every a.}
We denote by $t_{a??},t_{a!!}$ synchronous transitions based on an action $a$.
To ensure that in any repair and for all $a \in \Sigma_{sync}$ , $t_{a!!}$ is deleted if and only if all $t_{a??}$ 
are deleted, the designer can use the following constraint:
$$
\bigwedge_{a \in \Sigma_{sync}}[ (t_{a!!} \land (\bigvee_{t_{a??} \in
	\delta_{B}} t_{a??})) \lor (\neg t_{a!!} \land (\bigwedge_{t_{a??} \in
	\delta_{B}} \neg t_{a??}))]
$$

\subsection{Synchronous Systems Constraints}
The procedure \textsc{BuildConstr} in Algorithm \ref{alg:ParamRepair} does not take into consideration synchronous actions. Hence, we need a new procedure that offers special treatment for synchronization.
%In a synchronous system $M^{s}=(\cstateset,\cstateset_0,\Sigma,\Trans)$ with $s \in \{BR,PR\}$,  more than one local transition is involved in a synchronous transition.
To simplify presentation we assume w.l.o.g. that each $a+$, with $+ \in \{!,!!\}$, appears on exactly one local transition.
We denote by $\Trans_{sync}(\cstate,a)$ the state obtained by executing action $a$ in state $\cstate$. 
Additionally, let $\Trans^{local}_{sync}(\cstate,a)=\{(\state_U,a_*,\state'_U) \in \trans \mid * \in \{?,!,??,!!\}, \text{ and } a  \text{ is enabled in } \cstate  \}$, and let $T(\cstate,a) = \bigvee_{t_a \in \Trans^{local}_{sync}(\cstate,a)} \neg t_a$.
%In a Pairwise rendezvous system we say that an action $a$ is enabled in a global state $\vc$ if $\exists i,j,k,l < |B|$ s.t. $\vc(i) > 0, \vc(j)>0$ and $(\state_{i},a!,\state_{k}),(\state_{j},a?,\state_{l}), \in \delta_B$. 
In a Broadcast system we say that an action $a$ is enabled in a global state $\vc$ if  $\exists i,j < |B|$ s.t. $\vc(i) > 0$ and $(\state_{B_i},a!!,\state_{B_j}) \in \delta_B$. 
In a Pairwise rendezvous system we say that an action $a$ is enabled in a global state $(\vc)$ if $\exists i,j,k,l < |B|$ s.t. $\vc(i) > 0, \vc(j)>0)$ and $(\state_{B_i},a!,\state_{B_k}),(\state_{B_j},a?,\state_{B_l}), \in \delta_B$.

Given a synchronous system $M^{X}=(\cstateset,\cstateset_0,\Sigma,\Trans)$ with $X \in \{BR,PR\}$, a state $\cstate$, and a reachable error sequence $\mR\mE$,
Algorithm \ref{alg:BuildSyncConstr} computes a propositional formula over the set of local transitions that encodes 
all possible ways for a state $\cstate$ to avoid reaching an error. 
 
\begin{algorithm}[!t]
	\caption{Synchronous Constraint Computation}\label{alg:BuildSyncConstr}
	\begin{algorithmic}[1]	
		%\Procedure{BSC}{State $\cstate$, $RE_i$,$\mR\mE$}
		%\If{$RE_i = RE_{1}$}
		%\State \textbf{return} $\bigwedge_{t_U \in \Trans^{local}(\cstate,RE_0)} \neg t_U \bigwedge_{a \in \Sigma_{sync} \land \Trans(\cstate,a) \in RE_1} T(\cstate,a)$  \label{line:case 1}
		%\Else
		%\State \textbf{return} 
		%\State $\bigwedge_{t_U \in \Trans^{local}(\cstate,RE_{i-1})}(\neg t_U \lor
		%BSC(\Trans(\cstate,t_U),RE_{i-1},\mR\mE))$ \label{line:case 2}
		%\State $\bigwedge_{a \in \Sigma_{sync} \land \Trans(\cstate,a) \in RE_{i-1}} [ \bigvee_{t_a \in %\Trans^{local}_{sync}(\cstate,a)} \neg t_a \lor BSC(\Trans(\cstate,t_a),RE_{i-1},\mR\mE)  ] \}$
		%\EndIf
		%\EndProcedure
		\Procedure{BSC}{State $\cstate$, $\mR\mE$}
		\If{$\mR\mE[1:]$ is empty}
		\State \textbf{return} $\bigwedge_{t_U \in \Trans^{local}(\cstate,\mR\mE[0])} \neg t_U \newline {\color{white}.....................} \bigwedge_{a \in \Sigma_{sync} \land \Trans(\cstate,a) \in \mR\mE[0]} T(\cstate,a)$  \label{line:case 1}
		\Else
		\State \textbf{return} $\bigwedge_{t_U \in \Trans^{local}(\cstate,\mR\mE[0])}(\neg t_U \lor \newline {\color{white}............................................}
		BSC(\Trans(\cstate,t_U),\mR\mE[1:]))
		 \newline {\color{white}......................}
		  \bigwedge_{a \in \Sigma_{sync} \land \Trans(\cstate,a) \in \mR\mE[0]} [ T(\cstate,a) \lor 
		  \newline {\color{white}............................................}
		  BSC(\Trans(\cstate,t_a),\mR\mE[1:])  ] \}$
		\label{line:case 2}
		\EndIf
		\EndProcedure
	\end{algorithmic}
\end{algorithm}

%do we need to prove that if we remove one of the $T(\state,a)$ then the resulting $\Trans'(\state,a)$ is not in $RE_{i-1}$? or that it is enough to remove one of the  $T(\state,a)$?. It is clear for PR is it for BR. This should be done using the basis computation. The idea is that if we delete a subset of the receive $a??$ and if the resulting state is still in $RE_{i-1}$ that it would be the concern of another synchronous transition $a'$.

\section{Example: Reader-Writer}
\label{sec:PW-RW}
Consider the parameterized pairwise system that consists of one scheduler~(Figure \ref{fig:scheduler}) and a parameterized number of instances of the reader-writer process template~(Figure \ref{fig:reader-writer}). The scheduler process template has all possible receive actions from every state. In such system, the scheduler can not guarantee that, at any moment, there is at most one process in the \emph{writing} state $\state_1$~(Figure \ref{fig:reader-writer}).
Let $t_{U_1}=[\state_0,(write!),\state_1], t_{U_2}=[\state_{A,0},(write?),\state_{A,1}],t_{U_3}=[\state_{A,1},(write?),\state_{A,0}],\\t_{U_4}=[\state_0,(read!),\state_2],t_{U_5}=[\state_{A,0},(read?),\state_{A,1}],t_{U_6}=[\state_{A,1},(read?),\state_{A,0}],t_{U_7}=[\state_1,(done_w!),\state_0],t_{U_8}=[\state_{A,1},(done_w?),\state_{A,0}],t_{U_9}=[\state_{A,0},(done_w?),\state_{A,1}],t_{U_{10}}=[\state_2,(done_r!),\state_0],t_{U_{11}}= [\state_{A,1},(done_r?),\state_{A,0}],t_{U_{12}}=[\state_{A,0},(done_r?),\state_{A,1}]$.\\
Let $ERR={\uparrow}\{(\state_{A,0},(0,2,0))(\state_{A,1},(0,2,0))\}$.\\
Let $UserConstr_{PR} = ( t_{U_1} \land  ( t_{U_2} \lor t_{U_3} ))  \land  ( t_{U_4} \land  ( t_{U_5} \lor t_{U_6} ))  \land  ( t_{U_7} \land  ( t_{U_8} \lor t_{U_9} ))  \land  ( t_{U_{10}} \land  ( t_{U_{11}} \lor t_{U_{12}} ))$.\\
Then running our repair algorithm will produce the following results:\\
First call to model checker returns:\\ $RE_0=\{(\state_{A,0},(0,2,0))\},RE_1=\{(\state_{A,1},(1,1,0))\},RE_2=\{(\state_{A,0},(2,0,0))\}$.\\
Constraints for SAT: $accConstr_1 = TRConstr_{PR} \land UserConstr_{PR} \land (\neg t_{U_1} \lor \neg t_{U_2} \lor \neg t_{U_3})$.\\
SAT solvers solution 1:\\
 $\neg t_{U_2} \land \neg t_{U_6} \land \neg t_{U_9} \land \neg t_{U_{12}}$.\\
Second call to model checker returns:\\ $RE_0=\{(\state_{A,0},(0,2,0))\},RE_1=\{(\state_{A,1},(1,1,0))\},RE_2=\{(\state_{A,0},(2,1,0))\},RE_3=\{(\state_{A,1},(3,0,0))\},RE_4=\{(\state_{A,0},(4,0,0))\}$.
Constraints for SAT:\\
 $accConstr_2 = accConstr_1 \land (\neg t_{U_1} \lor \neg t_{U_3} \lor \neg t_{U_4}  \lor \neg t_{U_5})$.\\
SAT solvers solution 2:\\
 $\neg t_{U_3} \land \neg t_{U_5} \land \neg t_{U_9} \land \neg t_{U_{12}}$.\\
Third call to model checker returns:\\ $RE_0=\{(\state_{A,0},(0,2,0))\},RE_1=\{(\state_{A,1},(1,1,0))\},RE_2=\{(\state_{A,0},(2,1,0))\},RE_3=\{(\state_{A,1},(3,0,0))\},RE_4=\{(\state_{A,0},(3,0,0))\}$.
Constraints for SAT:\\
 $accConstr_3 = accConstr_2 \land (\neg t_{U_1} \lor \neg t_{U_2} \lor \neg t_{U_4}  \lor \neg t_{U_6})$.\\
SAT solvers solution 3:\\
 $\neg t_{U_3} \land \neg t_{U_6} \land \neg t_{U_9} \land \neg t_{U_{12}}$.\\
The fourth call of the model checker returns true and we obtain the correct scheduler in Figure \ref{fig:pw-correct-scheduler}.

\begin{figure}[h]
	%\vspace{-1.2em}
	\fboxrule=0pt
	\fbox{
		\begin{minipage}[b]{0.29\linewidth}
			\centering
			\begin{tikzpicture}[node distance=3cm,>=stealth,auto]
				\tikzstyle{state}=[circle,thick,draw=black,minimum size=8mm]
				\begin{scope}    
					\node [state] (qA0) {$q_{A,0}$}
					edge [pre] (0,1);
					\node [state] (qA1) [below of = qA0] {$q_{A,1}$}
					edge [post, bend left] node[left=0.1,rotate=90,xshift=0.2cm]{\tiny{$read?$}} (qA0)
					(qA1.155) edge [post, bend left] node[left=0.1cm,rotate=90,xshift=0.2cm]{\tiny{$done_r?$}} (qA0.215)
					(qA1.185) edge [post, bend left=40] node[left=0.1cm,rotate=90,xshift=0.2cm]{\tiny{$write?$}} (qA0.180)
					(qA1.215) edge [post, bend left=60] node[left=0.1cm,rotate=90,xshift=0.2cm]{\tiny{$done_w?$}} (qA0.145)
					(qA1.95) edge [pre, bend right=20] node[right=0.1cm,rotate=270,xshift=-0.3cm]{\tiny{$read?$}} (qA0)
					(qA1.60) edge [pre, bend right=20] node[right=0.1cm,rotate=270,xshift=-0.3cm]{\tiny{$done_r?$}} (qA0.315)
					(qA1.25) edge [pre, bend right=25] node[right=0.1cm,rotate=270,xshift=-0.3cm]{\tiny{$write?$}} (qA0.350)
					(qA1.355) edge [pre, bend right=40] node[right=0.1cm,rotate=270,xshift=-0.3cm]{\tiny{$done_w?$}} (qA0.15);  
					%\node (e1-lbl1) [above left of = qA1] {\tiny{$read?$}};
					%\node (e1-lbl2) [below of = e1-lbl1,node distance=.25cm] {\tiny{$done_r?$}};
					%\node (e1-lbl3) [below of = e1-lbl2,node distance=.25cm] {\tiny{$write?$}};
					%\node (e1-lbl4) [below of = e1-lbl3,node distance=.25cm] {\tiny{$done_w?$}};
					%\node (e2-lbl1) [above right of = qA1] {\tiny{$read?$}};
					%\node (e2-lbl2) [below of = e2-lbl1,node distance=.25cm] {\tiny{$done_r?$}};
					%\node (e2-lbl3) [below of = e2-lbl2,node distance=.25cm] {\tiny{$write?$}};
					%\node (e2-lbl4) [below of = e2-lbl3,node distance=.25cm] {\tiny{$done_w?$}};			
					%\tiny{$read?//done-r?write?//done-w?$}
				\end{scope}
			\end{tikzpicture}
			\caption{Scheduler} \label{fig:scheduler}
		\end{minipage}
	}
	\fbox{%
		\begin{minipage}[b]{0.35\linewidth}
			\centering
			
			\begin{tikzpicture}[node distance=1.5cm,>=stealth,auto]
				\tikzstyle{state}=[circle,thick,draw=black,minimum size=8mm]
				\begin{scope}    
					\node [state] (qA0) {$q_{0}$}
					edge [pre] (-0.9,0)
					edge [loop right] node[above = 0.5pt] {\tiny{$\tau$}} (qA0);
					\node [state] (qA1) [below of = qA0] {$q_{1}$}
					edge [pre, bend left] node[left = 1pt] {\tiny{$write!$}}  (qA0)
					edge [post, bend right] node[right = 1pt] {\tiny{$done_w!$}}  (qA0); % 
					\node [state] (qA2) [above of = qA0] {$q_{2}$}
					edge [post, bend left] node[right = 1pt] {\tiny{$done_r!$}}  (qA0)
					edge [pre, bend right] node[left = 1pt] {\tiny{$read!$}}  (qA0); %   
					\node (q2-lbl) [above of = qA2,node distance=.6cm] {$\{reading\}$};
					\node (q1-lbl) [below of = qA1,node distance=.65cm] {$\{writing\}$};
				\end{scope}
			\end{tikzpicture}
			\caption{Reader-Writer} \label{fig:reader-writer}
		\end{minipage}
	}
	\fbox{
		\begin{minipage}[b]{0.4\linewidth}
			\centering
			\begin{tikzpicture}[node distance=2.6cm,>=stealth,auto]
				\tikzstyle{state}=[circle,thick,draw=black,minimum size=8mm]
				\begin{scope}    
					\node [state] (qA0) {$q_{A,0}$}
					edge [pre] (0,1);
					\node [state] (qA1) [below of = qA0] {$q_{A,1}$}
					edge [post, bend left] node[left=0.1cm,rotate=90,xshift=0.2cm] {\tiny{$done_w?$}}  (qA0)
					(qA1.185) edge [post, bend left] node[left=0.1cm,rotate=90,xshift=0.2cm]{\tiny{$done_r?$}} (qA0.205)			
					(qA1.95) edge [pre, bend right=20] node[right=0.1cm,rotate=270,xshift=-0.3cm]{\tiny{$read?$}} (qA0)
					(qA1.40) edge [pre, bend right=20] node[right=0.1cm,rotate=270,xshift=-0.3cm]{\tiny{$write?$}} (qA0.325);
				\end{scope}
			\end{tikzpicture}
			\caption{Safe Scheduler} \label{fig:pw-correct-scheduler}
		\end{minipage}
	}
	%\vspace{-.5em}
\end{figure}

\newpage
\section{Example: MESI Protocol}
\label{sec:MESI}

Consider the cache coherence protocol MESI in Figure~\ref{fig:MESI}, 
\begin{wrapfigure}[16]{r}{0.55\linewidth}
%\vspace{-2em}
\centering
\scalebox{1}{
\begin{tikzpicture}[node distance=2.5cm,>=stealth,auto]
\tikzstyle{state}=[circle,thick,draw=black,minimum size=7mm]
\tikzstyle{final}=[circle,double,draw=black,minimum size=7mm]
\begin{scope}
\node [state] (I) {$I$}
edge [pre] (-0.8,0)
edge [loop above] node[above = 0.5pt] {\tiny{$read??,write$-$inv??$}} (I);
\node [state] (S) [right of = I]  {$S$}
edge [loop above] node[above = 0.5pt] {\tiny{$read??,local$-$read$}} (S)
edge [pre, bend left]  node[above = 0.5pt] {\tiny{$read!!$}} (I)
edge [post, bend right]  node[above = 0.5pt] {\tiny{$write$-$inv??$}} (I);
\node [state] (M) [below of = I]  {$M$}
edge [loop below] node[below = 0.5pt] {\tiny{$write,local$-$read$}} (M)
edge [post]  node[below = 0.5pt] {\tiny{$write$-$inv??$}} (I)
edge [post]  node[near start,below=0.5pt] {\tiny{$read??$}} (S);
\node [state] (E) [right of = M]  {$E$}
edge [loop below] node[below = 0.5pt] {\tiny{$local$-$read$}} (E)
edge [post]  node[below = 0.5pt] {\tiny{$write$}} (M)
edge [pre, bend left]  node[below = -3pt] {\tiny{$write$-$inv!!$}} (S)
edge [post, bend right]  node[above = 0.5pt] {\tiny{$read??$}} (S)
edge [loop right] node[below = .5pt] {\tiny{$read??$}} (E)
edge [post]  node[below = 0.5pt,near end] {\tiny{$write$-$inv??$}} (I);  
\end{scope}
\end{tikzpicture}
\label{fig:mesi}
}
\caption{MESI protocol}\label{fig:MESI}
\end{wrapfigure}
where:
\begin{itemize}
	\item $M$ stands for \emph{modified} and indicates that the cache has been changed.
	\item $E$ stands for \emph{exclusive} and indicates that no other process seizes this cache line.
	\item $S$ stands for \emph{shared} and indicates that more than one process hold this cache line.
	\item $I$ stands for \emph{invalid} and indicates that the cache's content is not guaranteed to be valid as it might have been changed by some process.
	
\end{itemize}

Initially all processes are in $I$ and let a state vector be as follows: $(M,E,S,I)$. An important property for MESI protocol is that a cache line should not be modified by one process~(in state $M$) and in shared state for another process~(in state $S$). In such case the set of error states is: $\nuparrow (1,0,1,0)$. We can run Algorithm \ref{alg:ParamRepair} on $M$, $\nuparrow (1,0,1,0)$, $TRConstr_{BC} \land \bigwedge_{a \in \Sigma_{sync}} \bigwedge_{\state_{B} \in \stateset_B} one(t_{\state_B}^{a_{??}})$. The model checker will return the following error sequence (nonessential states are omitted):\\
$E_0=\{(1,0,1,0)\},E_1=\{(0,1,1,0)\},E_2=\{(0,1,0,1)\},E_3=\{(0,0,1,1)\},E_4=\{(0,0,0,2)\}$.
Running the procedure \textsc{BuildSyncConstr}~(Algorithm \ref{alg:BuildSyncConstr}) in Line \ref{line:newConstr} will return the following Boolean formula $newConstr=$
$\neg (I,read!!,S) \lor \neg (I,read??,I) \lor \neg (S,write\texttt{-}inv!!,E) \lor \neg (I,write\texttt{-}inv??,I) \lor \neg (E,read??,E) \lor \neg (I,read!!,S) \lor  \neg(E,write,S) $.\\
Running the SAT solve in Line \ref{line:assignment} on 
$$newConstr \land TRConstr'_{BC} \land \bigwedge_{a \in \Sigma_{sync}} \bigwedge_{\state_{B} \in \stateset_B} one(t_{\state_B}^{a_{??}}) \bigwedge_{t_U \in \{\delta_{B}^{\tau}\}} t_U$$
 will return the only solution $\neg (E,read??,E)$ which clearly fixes the system.

\end{appendices}
\end{document}